\newcommand{\bmat}{\left[ \begin{array}}
\newcommand{\emat}{\end{array} \right]}
\newtheorem{claim}[theorem]{Claim}
\newtheorem{fact}[theorem]{Fact}
\newtheorem{comment}[theorem]{Comment}
\newcommand{\remove}[1]{}
\newcommand{\chold}{Cholesky decomposition}
\newcommand{\lt}{\left}
\newcommand{\rt}{\right}
\newcommand{\io}{I/O-complexity\,}
\newcommand{\ignore}[1]{}
\begin{document}

% Page heads
\markboth{Ballard, Demmel, Holtz, and Schwartz}
{Graph Expansion and Communication Costs of Fast Matrix Multiplication}

\title{Graph Expansion and Communication Costs of \\
Fast Matrix Multiplication}

\author{
Grey Ballard
\affil{University of California at Berkeley}
James Demmel
\affil{University of California at Berkeley}
Olga Holtz
\affil{University of California at Berkeley and Technische Universit\"at Berlin}
Oded Schwartz
\affil{University of California at Berkeley}
}

\acmformat{Ballard, G., Demmel, J., Holtz, O., and Schwartz, O. 2011.
Graph Expansion and Communication Costs of Fast Matrix Multiplication.}

\begin{bottomstuff}
A preliminary version of this paper appeared in Proceedings of the
23rd ACM Symposium on Parallelism in Algorithms and Architectures (SPAA'11) \cite{BallardDemmelHoltzSchwartz11b} and received the best paper award.

This work is supported by Microsoft (Award $\#$024263)
and Intel (Award $\#$024894) funding
and by matching funding by U.C. Discovery (Award $\#$DIG07-10227);
Additional support from Par Lab affiliates National Instruments, NEC,
Nokia, NVIDIA, and Samsung.
Research supported by U.S. Department of Energy grants under
Grant Numbers DE-SC0003959, DE-SC0004938, and DE-FC02-06-ER25786,
as well as Lawrence Berkeley National Laboratory
Contract DE-AC02-05CH11231; Research supported by the Sofja Kovalevskaja
programme of Alexander von Humboldt Foundation
and by the National Science Foundation under agreement DMS-0635607;
Research supported by ERC Starting Grant Number 239985.

Author's addresses: G.~Ballard, Computer Science Division,
University of California, Berkeley, CA 94720;
J.~Demmel, Mathematics Department and Computer Science Division,
University of California, Berkeley, CA 94720;
O.~Holtz, Department of Mathematics,
University of California, Berkeley;
O.~Schwartz, Computer Science Division,
University of California, Berkeley, CA 94720.
\end{bottomstuff}

\begin{abstract}
The communication cost of algorithms (also known as I/O-complexity) is shown to be closely
related to the expansion properties of the corresponding
computation  graphs. We demonstrate this on Strassen's and other
fast matrix multiplication algorithms, and obtain first lower
bounds on their communication costs.

In the sequential case, where the processor has a fast memory of size $M$,
too small to store three $n$-by-$n$ matrices,
the lower bound on
the number of words moved between fast and slow memory is,
for many of the matrix multiplication algorithms,
$\Omega\lt( \lt(\frac{n}{\sqrt M } \rt)^{\omega_0}\cdot M \right)$,
where $\omega_0$ is the exponent in the arithmetic count (e.g., $\omega_0 = \lg 7$ for Strassen, and $\omega_0 = 3$ for conventional
matrix multiplication). With $p$ parallel processors,
each with fast memory of size $M$, the lower bound is $p$ times smaller.

These bounds are attainable both for sequential and for parallel algorithms and hence optimal.
These bounds can also be attained by many fast algorithms in linear algebra
(e.g., algorithms for LU, QR, and solving the Sylvester equation).

\end{abstract}

\category{F.2.1}{Analysis of Algorithms and Problem Complexity}{Computations on Matrices}
%\category{B.2.1,
%B.2.4,
%B.4.3,
%B.8.2,
%C.1.2,
%C.1.4,
%C.2.0,
%D.1.3,
%D.1.4,
%D.2.8,
%F.2.1,
%G.1.0,
%G.1.3,
%G.2.1,
%G.2.2,
%I.1.2.

\terms{Algorithms, Design, Performance}

\keywords{Communication-avoiding algorithms, Fast matrix multiplication,
I/O-Complexity}

\maketitle

\section{Introduction}
The communication of an algorithm (e.g., transferring data between
the CPU and memory devices, or between parallel processors, a.k.a.
I/O-complexity) often costs significantly more time than its
arithmetic.
It is therefore of
interest (1) to obtain lower bounds for the
communication needed, and (2) to
design and implement algorithms attaining these lower bounds.
Communication also requires much more energy
than arithmetic, and saving energy may be even more important than
saving time.

Communication time varies by orders of
magnitude, from $O\lt(10^{-9}\rt)$ second for an L1 cache
reference, to $O\lt(10^{-2}\rt)$ second for disk access. The variation can be
even more dramatic when communication occurs over networks or the
internet.
While Moore's Law predicts an exponential increase of hardware density in
general, the annual improvement rate of
time-per-arithmetic-operation has, over the years, consistently
exceeded  that of time-per-word read/write
\cite{GrahamSnirPatterson04,FullerMillett11}.  The fraction of running time spent
on communication is thus expected to increase further.

\subsection{Communication model}

We model communication costs of sequential and parallel architecture as follows. In the
sequential case, with two levels of memory hierarchy (fast and
slow), communication means reading data items ({\em words}) from
slow memory (of unbounded size), to fast memory (of size $M$) and writing data from fast memory to
slow memory\footnote{See \cite{BallardDemmelHoltzSchwartz10a} for definition of
a model with memory hierarchy,
and a reduction from the two-level model. All bounds in this paper thus apply to
the model with memory hierarchy as well.}.
Words that are stored contiguously in slow memory can be read or
written in a bundle which we will call a {\em message}. We assume
that a message of $n$ words can be communicated between fast and
slow memory in time $\alpha+\beta n$ where $\alpha$ is the {\em
latency} (seconds per message) and $\beta$ is the {\em inverse
bandwidth} (seconds per word). We define the {\em bandwidth cost} of
an algorithm to be the total number of words communicated and the {\em latency cost} of an algorithm to be the total number of messages communicated.
We assume that the input matrices initially reside in slow memory, and are too large to fit
in the smaller fast memory. Our goal then is to minimize both bandwidth and latency costs.\footnote{The sequential
communication model used here is sometimes called the
\emph{two-level I/O model} or \emph{disk access machine (DAM)}
model (see \cite{AggarwalVitter88,BenderBrodalFagerbergJacobVicari07,ChowdhuryRamachandran06}).  Our bandwidth cost model follows that of
\cite{HongKung81} and \cite{IronyToledoTiskin04} in that it
assumes the block-transfer size is one word of data ($B=1$ in the
common notation).  However, our model allows message sizes to vary from one word up to the maximum number of words that can fit in fast memory.}

In the parallel case, we assume $p$ processors, each with memory of size $M$ (or with larger memory size, as long as we never use more than $M$ in each processor).  We are interested in the communication among
the processors.  As in the sequential case, we assume that a message
of $n$ consecutively stored words can be communicated in time
$\alpha+\beta n$. This cost includes the time required to ``pack'' non-contiguous words into a single message, if necessary.
We assume that the input is initially evenly
distributed among all processors, so $M\cdot p$ is at least as large as the input.
Again, the bandwidth cost and latency cost are the word and message counts respectively. However, we count the number of words and messages
communicated along the critical path as defined in \cite{YangMiller88}
(i.e., two words that are communicated simultaneously are counted only once),
as this metric is closely related to the total running time of the algorithm.
As before, our goal is to minimize the number of words and messages
communicated.

We assume that
(1) the cost per flop is the same on each processor and the communication costs
($\alpha$ and $\beta$)
are the same between each pair of processors
(this assumption is for ease of presentation and can be dropped, using \cite{BallardDemmelGearhart11};
see Section~\ref{sec:other-hardware}),
(2) all communication is ``blocking'': a processor can send/receive a single message at a time, and cannot communicate and compute a flop simultaneously (the latter assumption can be dropped, affecting the running time by a factor of two at most),
and
(3) there is no communication resource contention
among processors. For example, if processor 0 sends a message of size $n$ to processor 1 at time 0,
and processor 2 sends a message of size $n$ to processor 3 also at time 0, the cost along the critical
path is $\alpha + \beta n$.  However, if both processor 0 and processor 1 try to send a message to
processor 2 at the same time, the cost along the critical path will be the sum of the costs of each message.

\subsection{The Computation Graph and Implementations of an Algorithm}

The computation performed by an algorithm on a given input
can be modeled (see Section \ref{sec:expansion}) as a computation directed acyclic graph {\em (CDAG) }:
We have a vertex for each input /
intermediate / output argument, and edges according to
direct dependencies (e.g., for the binary arithmetic operation $x:=y+z$
we have a directed edge from $v_y$ to $v_x$ and from $v_z$ to $v_x$,
where the vertices $v_x,v_y,v_z$ stand for the arguments $x,y,z$, respectively).

An implementation of an algorithm determines,
in the parallel model, which arithmetic operations are performed by
which of the $p$ processors.
This corresponds to partitioning the corresponding CDAG into $p$ parts.
Edges crossing between the various parts correspond to arguments that are in the possession of one processor,
but are needed by another processor, therefore relate to communication.
In the sequential model, an implementation determines the order of the
arithmetic operations, in a way that respects the partial ordering of the CDAG
(see Section \ref{sec:expansion} relating this to communication cost).

Implementations of an algorithm may vary greatly in their communication costs.
The {\em \io of an algorithm} is the minimum bandwidth cost of the algorithm,
over all possible implementations. The {\io of a problem} is defined to be the minimum \io of
all algorithms for this problem. A lower bound of the \io of an algorithm is therefore a results of the form: any implementation of algorithm $Alg$ requires at least $X$ communication. An upper bound is of the form: there is an implementation for algorithm $Alg$ that requires at most $X$ communication.
We detail below some of the \io lower and upper bounds of specific algorithms, or a class of algorithms. \io lower bounds for a {\em problem} are claims
 of the form: any algorithm for a problem $P$ requires at least $X$ communication. These are much harder to find
 (but see for example [Demmel, Grigori, Hoemmen, and Langou, 2008]).

The lower bounds in this paper are for all {\em implementations} for a family of algorithms: ``Strassen-like'' fast matrix multiplication. Generally speaking, a ``Strassen-like'' algorithm utilizes an algorithm for multiplying two constant-size matrices in order to recursively multiply matrices of arbitrary size;
see Section \ref{sec:other-mm} for precise definition and technical assumptions.

\subsection{Previous Work}
Consider the classical $\Theta(n^3)$
algorithm for matrix multiplication\footnote{By which we mean any algorithm that computes
using the $n^3$ multiplications, whether this is done recursively, iteratively, block-wise or any other way.}. While na\"{i}ve
implementations are communication inefficient,
communication-minimizing sequential and parallel variants of this
algorithm were constructed, and proved optimal, by matching lower
bounds
\cite{Cannon69,HongKung81,FrigoLeisersonProkopRamachandran99,IronyToledoTiskin04}.

In
\cite{BallardDemmelHoltzSchwartz10a,BallardDemmelHoltzSchwartz11a}
we generalize the results of \cite{HongKung81,IronyToledoTiskin04}
regarding matrix multiplication, to obtain new \io lower bounds
for a much wider variety of algorithms. Most of our bounds are
shown to be tight. This includes all ``classical'' algorithms for $LU$
factorization, Cholesky factorization, $LDL^T$ factorization, and many for the $QR$
factorization, and eigenvalues and singular
values algorithms. Thus we essentially cover all direct methods of linear
algebra. The results hold for dense matrix algorithms (most of
them have $O(n^3)$ complexity), as well as sparse matrix algorithms
 (whose running time depends on the number of non-zero
elements, and their locations). They apply to sequential and parallel algorithms, to
compositions of linear algebra operations (like computing the
powers of a matrix), and to certain graph-theoretic
problems\footnote{See \cite{MichaelPennerPrasanna02} for bounds on
graph-related problems, and our
\cite{BallardDemmelHoltzSchwartz11a} for a detailed list of
previously known and recently designed sequential and parallel
algorithms that attain the above mentioned lower bounds.}.

The optimal
algorithms for square matrix multiplication are well known.
Optimal algorithms for dense LU, Cholesky, QR,
eigenvalue problems and the SVD are more recent. These include
[\citeNP{Gustavson97};
\citeNP{Toledo97};
\citeNP{ElmrothGustavson98};
\citeNP{FrigoLeisersonProkopRamachandran99};
\citeNP{AhmedPingali00};
\citeNP{FrensWise03};
[Demmel, Grigori, Hoemmen, and Langou, 2008];
[Demmel, Grigori, and Xiang, 2008];
\citeNP{BallardDemmelHoltzSchwartz09a};
\citeNP{DavidDemmelGrigoriPeyronnet10};
\citeNP{DemmelGrigoriXiang10};
\citeNP{BallardDemmelDumitriu10}],
and are not part of
standard libraries like LAPACK \cite{LAPACK} and ScaLAPACK
\cite{SCALAPACK}. See \cite{BallardDemmelHoltzSchwartz11a} for more details.

In
\cite{BallardDemmelHoltzSchwartz10a,BallardDemmelHoltzSchwartz11a}
we use the approach of \cite{IronyToledoTiskin04}, based on the
Loomis-Whitney geometric theorem
\cite{LoomisWhitney49,BuragoZalgaller88}, by embedding segments of
the computation process into a three-dimensional cube. This
approach, however, is not suitable when distributivity is used, as
is the case in Strassen \cite{Strassen69} and other fast
matrix multiplication algorithms (e.g.,
\cite{CoppersmithWinograd90,CohnKleinbergSzegedyUmans05}).

While the \io of classic matrix multiplication and algorithms with
similar structure is quite well understood, this is not the case
for algorithms of more complex structure. The problem of minimizing
communication in parallel classical matrix multiplication was
addressed \cite{Cannon69} almost simultaneously with the
publication of Strassen's fast matrix multiplication
\cite{Strassen69}. Moreover, an \io lower bound for the classical
matrix multiplication algorithm has been known for three decades \cite{HongKung81}.
Nevertheless, the \io of Strassen's fast matrix multiplication and
similar algorithms has not been resolved.

In this paper we obtain first communication cost lower bounds for
Strassen's and other fast matrix multiplication algorithms, in the sequential and parallel models.
These bounds are attainable both for sequential and for parallel algorithms and so optimal.

\subsection{Communication Costs of Fast  Matrix Multiplication}
\subsubsection{Upper bound}
The \io $IO(n)$ of Strassen's algorithm
(see Algorithm \ref{alg:strassen}, Appendix \ref{sec:strassen-cited}),
applied to $n$-by-$n$ matrices on a machine with fast memory of
size $M$, can be bounded above as follows (for actual uses of
Strassen's algorithm, see \cite{DouglasHerouxSlishmanSmith94,Huss-LedermanJacobsonJohnsonTsaoTurnbull96,DesprezSuter04}):
Run the recursion until the matrices are sufficiently small. Then,
read the two input sub-matrices into the fast memory, perform the
matrix multiplication inside the fast memory, and write the
result into the slow memory\footnote{Here we assume that the recursion
tree is traversed in the usual depth-first order.}. We thus have $ IO(n) \leq 7 \cdot  IO
\lt(\frac{n}{2}\rt) + O(n^2)$ and $ IO\lt(\frac{\sqrt M}{3} \rt) =
O(M) $. Thus
\begin{equation}\label{eqn:strassen-upper}
IO(n) = O\lt( \lt(\frac{n}{\sqrt M } \rt)^{\lg 7}\cdot M \rt) .
\end{equation}

\subsubsection{Lower bound}

In this paper, we obtain a tight lower bound:
\begin{theorem}{\sc (Main Theorem)}\label{thm:main}
The \io $IO(n)$ of Strassen's algorithm on a machine with fast
memory of size $M$, assuming that no
intermediate values are computed twice\footnote{We assume no recomputation throughout the paper.}, is
\begin{equation}
IO(n) = \Omega\lt( \lt(\frac{n}{\sqrt M } \rt)^{\lg 7}\cdot M \rt)
.
\end{equation}
\end{theorem}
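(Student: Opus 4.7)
The plan is to follow the Hong--Kung ``$S$-partition'' paradigm, but to obtain the crucial combinatorial bound via edge expansion of the recursively-defined Strassen CDAG $H_{\lg n}$. Given any implementation, I would cut its execution trace into consecutive \emph{segments}, each performing exactly $M$ read/write operations (with perhaps one shorter tail segment). If $s$ segments suffice, then $IO(n)\ge (s-1)M$, so it suffices to upper-bound the number of Strassen multiplications $F$ that any single segment can execute; the total multiplication count $\Theta(n^{\lg 7})$ then yields $IO(n)\ge \Omega(n^{\lg 7}/F)\cdot M$. Choosing $F=O(M^{\lg 7/2})$ would give exactly the desired bound $\Omega((n/\sqrt M)^{\lg 7}\cdot M)$.

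Next, I would formalize what limits $F$ in terms of expansion. At the beginning of a segment at most $M$ values sit in fast memory, and at most $M$ more can be brought in during the segment; symmetrically at most $2M$ values ever need to be evicted or discarded. Consequently, if $V$ is the set of vertices of $H_{\lg n}$ touched in a segment, the set $R$ of vertices in $V$ whose parents are not in $V$, together with the set $W$ of vertices in $V$ whose children are not in $V$, satisfies $|R|+|W|\le 4M$. Thus the lower bound reduces to the purely graph-theoretic statement: \emph{every subset $V$ of vertices of $H_{\lg n}$ with ``dangling boundary'' $|R|+|W|\le 4M$ contains at most $O(M^{\lg 7/2})$ multiplication vertices.} This is exactly an edge-expansion property of the CDAG at the scale of the fast memory size.

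To prove this expansion statement, I would argue by induction on the recursion depth $k$, using the fact that $H_k$ is obtained from $H_{k-1}$ by replacing each of its $7$ multiplication gates by a copy of the ``one-level Strassen gadget'' $H_1$ (the linear encoder of the two $2\times 2$ factor matrices followed by $7$ scalar multiplications followed by the linear decoder producing the $4$ output entries). The base case is a direct calculation of the expansion of $H_1$, showing that any set of $m$ multiplication vertices among its seven requires boundary growth depending on $m$. The inductive step must then show that under the ``blow-up'' substitution the expansion function of $H_k$ scales correctly, so that the only subsets with small boundary are those that are localized inside a few deep sub-recursions; balancing the recursion gives the desired exponent $\lg 7/2$. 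The main obstacle will be the inductive step: Strassen's encoding and decoding matrices are dense linear combinations, so a vertex cut in $H_k$ need not respect the recursive block structure, and I will have to show that one may charge any violating cut to boundary edges at the appropriate level of recursion without losing constant factors at each of the $\lg n$ levels.
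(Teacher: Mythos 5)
Your high-level reduction is the same as the paper's: partition the execution into segments of bounded I/O, observe that the read/write operands of a segment form a small boundary, and conclude that a segment can contain only $O(M^{\lg 7 / 2})$ vertices (equivalently multiplications) once one has an expansion estimate at scale $M$. The specific exponent also matches: $h(G)=\Omega\bigl((4/7)^k\bigr)$ at depth $k=\tfrac12\lg M$ gives boundary $\Omega(M)$ for sets of size $\Theta(M^{\lg 7/2})$. That part of your plan is sound.

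There are, however, two genuine gaps. First, you propose to prove an expansion property of the \emph{entire} CDAG $H_{\lg n}$, but $H_{\lg n}$ is not of bounded degree: the input entries of $A$ and $B$ are re-read by the encoders $Enc_i A$, $Enc_i B$ at every level of the recursion, so some vertices have out-degree $\Theta(\lg n)$, and the edge-expansion machinery ($|R_S|+|W_S|\geq h(G)|S|/2$) quietly depends on constant degree. The paper's fix is to throw away the encoders entirely and argue only about $Dec_{\lg n}C$, the decoding sub-CDAG, which \emph{is} constant-degree (Fact~\ref{fct:constant-degree}) and still contains a constant fraction ($\geq 1/3$) of the vertices, which is enough via Claim~\ref{clm:subgraph}. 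Without some analogue of this restriction your boundary estimate does not follow from expansion.

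Second, your program for the expansion lemma itself --- induct on $k$, tracking how the expansion function survives the ``replace each multiplication node by $H_1$'' blow-up --- is exactly the step you flag as an obstacle, and it is where the paper departs from your plan. The paper does \emph{not} induct on depth and does not need the expansion constant to survive all $\lg n$ substitutions; in fact $h(Dec_k C)=\Theta((4/7)^k)$ \emph{decays} with $k$, and the trick is to use small-set expansion (Claim~\ref{clm:small-sets}) so that one only ever needs the expansion of $Dec_k C$ at depth $k=\tfrac12\lg M$, after decomposing $Dec_{\lg n}C$ into edge-disjoint copies of $Dec_{\frac12\lg M}C$. The expansion of a single $Dec_k C$ is then proved by a direct, non-inductive argument: one shows that if the fractional occupancy $\sigma_i$ of $S$ varies much across the levels $l_1,\dots,l_{k+1}$ then the $G_1$-gadgets straddling consecutive levels already contribute $\Omega(d\,|S|\,(4/7)^k)$ cut edges (Claim~\ref{clm:S-homogeneity}), while if the $\sigma_i$ are nearly constant then one passes to the tree $T_k$ of recursive sub-blocks, where the leaves are single vertices with $\rho_u\in\{0,1\}$; the telescoping variation along root-to-leaf paths again forces $\Omega(d\,|S|\,(4/7)^k)$ cut edges (Claim~\ref{clm:delta-tree-cost} and the computation following it). Connectivity of the one-level gadget $Dec_1C$ is the only structural input. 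So your target lemma is correct, but the proposed inductive route is not carried out, and it is not the route the paper takes; the actual argument is a global layered/tree argument rather than a blow-up induction.
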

It holds for any implementation and any known variant of
Strassen's algorithm\footnote{This lower bound for the sequential case seems to contradict the upper bound from FOCS'99
\cite{FrigoLeisersonProkopRamachandran99,BlellochChowdhuryGibbonsRamachandranChenKozuch08}), due to a miscalculation (see \cite{Leiserson08} for details).
}$^{,}$\footnote{To obtain the lower bounds for latency costs we divide the bandwidth costs by the maximal message length, $M$. This holds for all the lower bounds here, both in the sequential and parallel models.}.
This includes Winograd's $O(n^{\lg 7})$
variant that uses 15 additions instead of 18, which is the most
used fast matrix multiplication algorithm in practice
\cite{DouglasHerouxSlishmanSmith94,Huss-LedermanJacobsonJohnsonTsaoTurnbull96,DesprezSuter04}.

For parallel algorithms, using a reduction from the sequential to the parallel
model (see e.g., \cite{IronyToledoTiskin04} or our
\cite{BallardDemmelHoltzSchwartz11a}) this yields:

\begin{corollary}\label{cor:parallel}
Let $IO(n)$ be the \io of Strassen's algorithm, run on a machine
with $p$ processors, each with a local memory of size $M$. Assume that no intermediate values are computed twice.
Then
\begin{eqnarray*}
IO(n) &=& \Omega\lt( \lt(\frac{n}{\sqrt M } \rt)^{\lg 7}\cdot
\frac{M}{p} \rt) .
\end{eqnarray*}
\end{corollary}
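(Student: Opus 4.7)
The plan is a standard reduction from the parallel model to the sequential one, using the Main Theorem as a black box on the work performed by a single heavily loaded processor. Let $F = \Theta(n^{\lg 7})$ denote the total arithmetic work of Strassen's algorithm; under the no-recomputation assumption, the set of arithmetic operations is partitioned among the $p$ processors, so by pigeonhole there exists a processor $P^{*}$ that performs at least $F/p = \Omega(n^{\lg 7}/p)$ of them.

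Next I would simulate $P^{*}$'s execution as a sequential DAM computation with fast memory of size $M$. Every operand $P^{*}$ consumes is either already present in its local memory (of size at most $M$, which plays the role of the initial fast memory) or must arrive through an incoming message; symmetrically, every value $P^{*}$ produces that is later used by another processor must leave through an outgoing message. Thus the number of words communicated to and from $P^{*}$ is at least the sequential bandwidth cost of the sub-CDAG of Strassen's algorithm consisting exactly of the operations executed by $P^{*}$. Since the critical-path bandwidth cost $IO(n)$ is at least the communication incident to any single processor, lower-bounding the latter suffices.

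Finally, I would apply the Main Theorem to this sub-CDAG. The proof of the Main Theorem proceeds through an expansion argument on the Strassen CDAG that yields, for any set of $F'$ arithmetic operations performed with memory of size $M$ and no recomputation, an I/O lower bound of the form $\Omega(F'/M^{\lg 7/2 - 1})$. Plugging in $F' \geq \Theta(n^{\lg 7}/p)$ gives
\[
IO(n) \;=\; \Omega\!\left(\frac{n^{\lg 7}/p}{M^{\lg 7/2 - 1}}\right) \;=\; \Omega\!\left(\left(\frac{n}{\sqrt{M}}\right)^{\lg 7}\cdot \frac{M}{p}\right),
\]
as claimed.

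The only subtle step is the last one: one must be sure that the sequential lower bound applies not just to the full Strassen CDAG but to the sub-CDAG of operations performed by $P^{*}$, with the operations it does not perform treated as externally supplied inputs. The expansion argument behind the Main Theorem is local in exactly this sense---it bounds the communication of any subset of the computation as a function of the number of arithmetic operations in that subset---so the reduction goes through cleanly, and no new combinatorial work is required beyond invoking the Main Theorem on $F' = \Theta(n^{\lg 7}/p)$ operations.
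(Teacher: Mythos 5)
Your argument is the standard reduction from the parallel model back to the sequential one, which is exactly what the paper invokes for this corollary (by citation to Irony--Toledo--Tiskin and the authors' earlier work), so the overall approach matches. The one step worth tightening is the black-box appeal to the ``Main Theorem'': what you actually need is the quantitative form in Lemma~\ref{lem:central}, applied to segments of $P^*$'s local execution order (with the expansion measured in the full CDAG, treating values received from other processors as read operands), and since the expansion estimate is established only for the bounded-degree subgraph $Dec_{\lg n}C$, you should choose $P^*$ to be a processor performing at least a $1/p$ fraction of the $Dec\,C$ operations specifically (rather than merely $1/p$ of all arithmetic), so that the $\alpha = \Theta(1)$ density hypothesis of Lemma~\ref{lem:central} holds for $P^*$'s work.
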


We can extend these bounds to a wider class of all ``Strassen-like''
fast matrix multiplication
algorithms.
Note that this class does not include all fast matrix multiplication algorithms
(see Section~\ref{sec:strassen-like-algs} for definition of ``Strassen-like'' algorithms,
and in particular the technical assumption in Section \ref{sec:technical-assumption}).
Let $Alg$ be any ``Strassen-like'' matrix multiplication algorithm
that runs in time $O(n^{\omega_0})$ for some $2<\omega_0<3$. Then, using the same arguments that lead to (\ref{eqn:strassen-upper}), the \io of $Alg$ can be shown to be
$
IO(n) = O\lt( \lt(\frac{n}{\sqrt M } \rt)^{\omega_0}\cdot M \rt)
$. We obtain a matching lower bound:
\begin{theorem}\label{thm:general}
The \io $IO(n)$ of a recursive ``Strassen-like'' fast matrix multiplication algorithm
 with $O(n^{\omega_0})$
arithmetic operations, on a machine with fast memory of size $M$
is
\begin{equation}
IO(n) = \Omega\lt( \lt(\frac{n}{\sqrt M } \rt)^{\omega_0}\cdot M
\rt) .
\end{equation}
\end{theorem}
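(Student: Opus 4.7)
The plan is to follow the same template as the proof of Theorem~\ref{thm:main}, abstracting away from the specific $2\times 2$ base case of Strassen's algorithm. Let the Strassen-like algorithm be built from a base rule that multiplies $n_0\times n_0$ matrices using $q$ scalar multiplications, so that $\omega_0 = \log_{n_0} q$. I would construct the recursive CDAG $C_{Alg}(n)$ by starting from the base-case CDAG on $n_0\times n_0$ matrices and, at every recursion level, replacing each of the $q$ multiplication vertices by a copy of $C_{Alg}(n/n_0)$, glued in via the linear ``encoder'' (input combination) and ``decoder'' (output combination) layers prescribed by the base rule. After $\log_{n_0} n$ levels this produces a CDAG with $\Theta(n^{\omega_0})$ multiplication vertices, matching the arithmetic count.

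The crux of the argument is an edge-expansion bound on $C_{Alg}(n)$: for every subset $U$ of multiplication vertices of appropriate size, the number of edges leaving $U$ should be at least $\Omega(|U|^{2/\omega_0})$. For Strassen, such a bound is presumably established earlier by directly analyzing single-level expansion of the $2\times 2$-with-$7$ base rule and amplifying it through the recursive construction. I would lift that argument by replacing the ad hoc single-level analysis with one that uses only the non-degeneracy assumption of Section~\ref{sec:technical-assumption} on the base rule's encoder and decoder matrices. That assumption is precisely what guarantees that for every balanced partition of the base CDAG a constant fraction of vertices on each side depends on inputs from the other side, giving constant-factor edge expansion at the single level; recursion then propagates expansion at every scale across all $\log_{n_0} n$ levels.

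Once the expansion bound is established, the remainder of the proof is the Hong--Kung style partitioning argument applied exactly as for Theorem~\ref{thm:main}. I partition any implementation into segments during each of which fast memory turns over, so the interface between the segment's computation and the rest of the CDAG consists of at most $O(M)$ vertices. The expansion bound forces each such segment to perform at most $O(M^{\omega_0/2})$ computation vertices (inverting the isoperimetric inequality at the scale $|U|$ where its boundary first hits $O(M)$). Since $\Theta(n^{\omega_0})$ vertices must be computed overall, the number of segments is $\Omega((n/\sqrt M)^{\omega_0})$, and since each segment accounts for $\Omega(M)$ words of I/O, the total bandwidth cost is $\Omega((n/\sqrt M)^{\omega_0}\cdot M)$, as claimed.

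The principal obstacle is the expansion step. For Strassen one can exhibit the expander structure of the base CDAG by a direct combinatorial or linear-algebraic inspection of the $7$ products. For a general Strassen-like scheme, the single-level expansion must be derived solely from the stated non-degeneracy, and it must hold uniformly across all scales so that the product of expansions over the $\log_{n_0}n$ recursion levels yields the right global exponent $\omega_0$; any erosion of the constant at a level would compound multiplicatively. Modulo this generalization, every other step of the proof of Theorem~\ref{thm:main} carries over verbatim with $\lg 7$ replaced by $\omega_0$.
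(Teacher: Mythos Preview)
Your high-level template is the paper's: reduce Theorem~\ref{thm:general} to the Strassen argument by replacing the Strassen constants $4,7,4/7$ with $n_0^2$, $m(n_0)$, $n_0^2/m(n_0)$, and then rerun the partition/expansion machinery verbatim. That part is fine.

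There are, however, two concrete gaps relative to the paper's proof.

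First, you misidentify the technical assumption that drives the single-level expansion. The paper does not use any ``non-degeneracy of encoder and decoder matrices''; the sole hypothesis (Section~\ref{sec:technical-assumption}) is that the base decoding graph $Dec_1C$ is \emph{connected}. Connectivity is exactly what is invoked in Claims~\ref{clm:delta-cost} and~\ref{clm:delta-tree-cost}: a $G_1$-component straddling $S$ and $V\setminus S$ contributes at least one crossing edge because it is connected. With that in hand, the entire proof of Lemma~\ref{lem:Dec-expansion} goes through with the new constants, yielding $h(Dec_kC)=\Omega((n_0^2/m(n_0))^k)$ directly; there is no multiplicative ``erosion of constants across levels'' to worry about, because the level-by-level counting in that proof already sums the contributions explicitly.

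Second, you work with the full CDAG and its multiplication vertices, whereas the paper deliberately restricts to the subgraph $Dec_{\log n}C$. This matters: the encoder portions $Enc_{\log n}A$, $Enc_{\log n}B$ contain vertices of degree $\Theta(\log n)$, so the edge-expansion-to-vertex-boundary translation (Claim~\ref{clm:RW}) would lose an unbounded factor there. The paper therefore applies Lemma~\ref{lem:central} to the constant-fraction subgraph $Dec_{\log n}C$, and the one genuinely new ingredient for Theorem~\ref{thm:general} is verifying that $Dec_{\log n}C$ has bounded degree for an arbitrary Strassen-like base case. That is Claim~\ref{clm:constant-degree-general}: an input vertex of $Dec_1C$ cannot also be an output vertex, since an output is an irreducible bilinear form while an input is a product of linear forms. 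Your proposal omits this step entirely; without it the expansion bound does not convert into the stated I/O lower bound.
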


Note that or the cubic recursive algorithm for matrix multiplication,
$\omega_0= \lg 8 = 3$, and the above formula is
$IO(n) = \Omega\lt( \lt(\frac{n}{\sqrt M } \rt)^{3}\cdot M \rt) =
\Omega\lt( \frac{n^3}{\sqrt M } \rt)$
and identifies with the lower bounds of \cite{HongKung81}
and \cite{IronyToledoTiskin04}.
While the lower bounds for $\omega_0=3$ and for $\omega_0 < 3$ have the same form, the proofs are completely
different, and it is not clear whether our approach can be used to prove their lower bounds and vice versa.

\begin{corollary}\label{cor:parallel-str-like}
Let $IO(n)$ be the \io of a ``Strassen-like'' algorithm
(with arithmetic performed as in Theorem \ref{thm:general}), run on a machine with
$p$ processors, each with a local memory of size $M$. Assume that no intermediate values are computed
twice.
Then
\begin{eqnarray*}
IO(n) &=& \Omega\lt( \lt(\frac{n}{\sqrt M } \rt)^{\omega_0}\cdot
\frac{M}{p} \rt).
\end{eqnarray*}
\end{corollary}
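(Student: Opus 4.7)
The plan is to reduce the parallel bound to the sequential bound of Theorem \ref{thm:general} by a standard ``busiest processor'' argument. Fix any parallel implementation, and note that since the algorithm performs $\Theta(n^{\omega_0})$ arithmetic operations overall with no recomputation, by pigeonhole some processor $q$ must perform $F \geq c \cdot n^{\omega_0}/p$ of them.

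View the execution from the perspective of $q$: its local memory (of size $M$) plays the role of sequential fast memory, and every other processor's memory plays the role of slow memory. Each word that $q$ sends or receives during the parallel execution then corresponds to exactly one word of sequential I/O in this simulation. Hence the number of words communicated by $q$ is at least the sequential bandwidth cost of executing the sub-CDAG of $F$ operations assigned to $q$ on a machine with fast memory $M$. Since the critical-path bandwidth cost is at least the communication of any single processor, a lower bound on the communication of $q$ is automatically a lower bound on the parallel cost.

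Finally, apply Theorem \ref{thm:general} in its per-flop form: performing $F$ arithmetic operations of a Strassen-like algorithm with fast memory $M$ requires bandwidth $\Omega\lt(F \cdot M^{1-\omega_0/2}\rt)$. This is just a rewriting of $\Omega((n/\sqrt{M})^{\omega_0}\cdot M)$ using $F = \Theta(n^{\omega_0})$ at the top level, and is the form in which the graph-expansion argument naturally delivers the bound. Substituting $F = \Omega(n^{\omega_0}/p)$ yields $\Omega\lt((n/\sqrt{M})^{\omega_0} \cdot M/p\rt)$, as required. The main obstacle is justifying this per-flop reformulation: one must verify that the expansion argument used in Theorem \ref{thm:general} really delivers a bound depending on the number of arithmetic operations in the considered sub-CDAG, rather than only on the global problem size $n$, so that it survives restriction to the operations assigned to a single processor. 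This is natural in the expansion framework, because the underlying edge-expansion inequalities apply to arbitrary vertex subsets of the CDAG; once this point is granted, the reduction is immediate and yields the stated bound.
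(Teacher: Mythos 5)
Your proposal follows essentially the same route as the paper, which invokes the standard Irony--Toledo--Tiskin reduction from the sequential bound to the parallel bound (the paper simply cites \cite{IronyToledoTiskin04} and \cite{BallardDemmelHoltzSchwartz11a} and does not spell out the details). You correctly flag the one genuine subtlety: the sequential theorem must be used in a ``per-flop'' form that depends on the number of operations assigned to one processor rather than on the global $n$, and you correctly observe that the expansion machinery (Claim~\ref{clm:RW}, Claim~\ref{clm:subgraph}, Lemma~\ref{lem:central}) already delivers exactly this, because the inequality $|R_S|+|W_S|\geq \frac12 h_s(G')|S|$ applies to an arbitrary segment $S$ of the relevant subgraph. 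One small refinement worth making explicit: the ``busiest processor'' should be chosen as the one holding at least a $1/p$ fraction of the vertices of the \emph{bounded-degree subgraph} $G' = Dec_{\lg n}C$ (rather than of the whole CDAG), since the expansion bound in Lemma~\ref{lem:central} is stated for $G'$ and the partitioning argument counts only segments that contain a constant fraction of $G'$-vertices. With that pigeonhole choice the rest of your argument goes through verbatim and yields the stated $\Omega\left(\left(n/\sqrt{M}\right)^{\omega_0} M/p\right)$.
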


\subsection{The Expansion Approach}

The proof of the main theorem is based on estimating the edge
expansion of the computation directed acyclic graph {\em (CDAG) }
of an algorithm. The \io is shown to be closely related to the edge expansion
properties of this graph. As the graph has a recursive structure,
the expansion can be analyzed directly (combinatorially, similarly
to what is done in \cite{Mihail89,AlonSchwartzShapira08,KouckyKabanetsKolokolova10}) or by spectral
analysis (in the spirit of what was done for the Zig-Zag expanders
\cite{ReingoldVadhanWigderson00}). There is, however, a new
technical challenge. The replacement product and the Zig-Zag product
act similarly on all vertices. This is not what happens
in our case: multiplication and addition vertices behave differently.

The expansion approach is similar to the one taken by Hong and Kung
\cite{HongKung81}. They use the red-blue pebble game to obtain
tight lower bounds on the \io of many algorithms, including
classical $\Theta(n^3)$ matrix multiplication, matrix-vector multiplication, and
FFT. The proof is obtained by showing that the size of any
subset of the vertices of the CDAG is bounded by a function of
the size of its dominator set (recall that a dominator set $D$ for $S$
is a set of vertices such that every path from
an input vertex to a vertex in $S$  contains some vertex in $D$).

On the one hand, their dominator set technique has the advantage of
allowing recomputation of any intermediate value.
We were not able to allow recomputation using our edge expansion approach.
On the other hand, the dominator set requires large input or output.
Such an assumption is not needed by the edge expansion approach,
as the bounds are guaranteed
by edge expansion of many (internal) parts of the CDAG.
In
that regard, one can view the approach of
\cite{IronyToledoTiskin04} (also in
[Demmel, Grigori, Hoemmen, and Langou, 2008;
\citeNP{BallardDemmelHoltzSchwartz10a};
\citeNP{BallardDemmelHoltzSchwartz11a}])
as an edge expansion assertion on the CDAGs of the corresponding
classical algorithms.

The study of expansion properties of a CDAG was also suggested as
one of the main motivations of Lev and Valiant \cite{LevValiant83}
in their work on superconcentrators. They point out many papers
proving that classes of algorithms computing DFT, matrix inversion
and other problems all have to have CDAGs with good expansion properties,
thus providing lower bounds on the number of the arithmetic
operations required.

Other papers study connections between bounded space
computation, and combinatorial expansion-related properties of the
corresponding CDAG (see e.g.,
\cite{Savage94,BilardiPreparata99,BilardiPietracaprinaD'Alberto00}
and references therein).

\subsection{Paper organization}
Section \ref{sec:preliminaries} contains preliminaries on the notions
of graph expansion. In Section
\ref{sec:expansion} we state and prove the connection between \io
and the expansion properties of the computation graph. In Section
\ref{sec:strassen} we analyze the expansion of the CDAG of Strassen's
algorithm.
We discuss the generalization
of the bounds to other algorithms in Section \ref{sec:other-mm}, and present conclusions and open
problems in Section~\ref{sec:open}.

\section{Preliminaries}\label{sec:preliminaries}

\subsubsection{Edge expansion}
The edge expansion $h(G)$ of a $d$-regular undirected
graph $G=(V,E)$ is:
\begin{equation}
h(G) \equiv \min_{U \subseteq V, |U| \leq |V|/2} \frac{|E(U, V
\setminus U )|}{d \cdot |U|}
\end{equation}
where $E(A,B)\equiv E_G(A,B)$ is the set of edges connecting the vertex
sets $A$ and $B$. We omit the subscript $G$ when
the context makes it clear.

\subsubsection{When $G$ is not regular}
Note that CDAGs are typically not regular.
If a graph $G=(V,E)$ is not regular but has a bounded maximal degree $d$,
then we can add ($<d$) loops to vertices of degree $<d$, obtaining
a regular graph $G'$. We use the convention that a loop adds 1 to the degree of a vertex.
Note that for any $S \subseteq V$,
we have $|E_{G}(S,V \setminus S)| = |E_{G'}(S,V \setminus S)|$, as
none of the added loops contributes to the edge expansion
of~$G'$.

\subsubsection{Expansion of small sets}
For many graphs, small sets expand better than larger sets. Let $h_s(G)$ denote the edge expansion for
sets of size at most $s$ in $G$:
\begin{equation}
h_s(G) \equiv \min_{U \subseteq V, |U| \leq s} \frac{|E(U, V
\setminus U )|}{d \cdot |U|} ~.
\end{equation}
In many cases, $h_s(G)$ does not depend on $|V(G)|$, although it
may decrease when $s$ increases. One way of bounding $h_s(G)$ is
by decomposing $G$ into small subgraphs of large edge expansion.

\begin{claim}\label{clm:small-sets}
Let $G=(V,E)$ be a $d$-regular graph that can be decomposed into
edge-disjoint (but not necessarily vertex-disjoint) copies of a
$d'$-regular graph $G'=(V',E')$. Then the edge expansion
of $G$ for sets of size at most $|V'|/2$ is $h(G')\cdot
\frac{d'}{d}$, namely
\begin{equation*}
h_{\frac{|V'|}{2}}(G) \equiv \min_{U \subseteq V, |U| \leq |V'|/2}
\frac{|E_{G}(U, V \setminus U )|}{d \cdot|U|} \geq h(G')\cdot
\frac{d'}{d} ~.
\end{equation*}
\end{claim}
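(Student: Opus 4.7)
The plan is to fix an arbitrary $U \subseteq V$ with $|U| \leq |V'|/2$ and bound $|E_G(U, V \setminus U)|$ from below by applying the expansion of $G'$ copy by copy. Let $G'_1, G'_2, \ldots$ be the edge-disjoint copies of $G'$ that together decompose $G$; write $V_i$ for the vertex set of $G'_i$ and set $U_i := U \cap V_i$.

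Since $|U_i| \leq |U| \leq |V'|/2 = |V_i|/2$, the definition of $h(G')$ applied inside the copy $G'_i$ (which is $d'$-regular and isomorphic to $G'$) gives $|E_{G'_i}(U_i, V_i \setminus U_i)| \geq h(G') \cdot d' \cdot |U_i|$. Moreover, because the copies partition $E$ edge-disjointly, every edge in $E_G(U, V \setminus U)$ lies in a unique copy $G'_i$ and there contributes to $E_{G'_i}(U_i, V_i \setminus U_i)$; conversely, every edge in the latter cut has one endpoint in $U_i \subseteq U$ and one in $V_i \setminus U_i \subseteq V \setminus U$, so it contributes to $E_G(U, V \setminus U)$. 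Hence
$$ |E_G(U, V\setminus U)| = \sum_i |E_{G'_i}(U_i, V_i \setminus U_i)| \geq h(G') \cdot d' \cdot \sum_i |U_i|. $$

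To finish, I would observe that $\sum_i |U_i| \geq |U|$: since $G$ is $d$-regular with $d \geq 1$, each vertex of $U$ has at least one incident edge in $G$, and that edge lies in some copy, so each vertex of $U$ appears in at least one $V_i$. Dividing both sides of the displayed inequality by $d \cdot |U|$ yields $|E_G(U, V\setminus U)| / (d|U|) \geq h(G') \cdot d'/d$, which is precisely the claimed bound on $h_{|V'|/2}(G)$.

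The only real obstacle is bookkeeping: confirming the bijective correspondence between the $U$-cut of $G$ and the disjoint union of the $U_i$-cuts of the copies (which uses edge-disjointness in both directions) and then the elementary vertex-count. A sharper version of the vertex-count, based on the observation that each vertex of $G$ lies in exactly $d/d'$ copies (since its $d$ incident edges are distributed among copies in which it has degree $d'$), would give the slightly stronger conclusion $h_{|V'|/2}(G) \geq h(G')$, but the form stated in the claim is all that is needed for the later applications.
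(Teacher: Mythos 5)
Your proof is correct and follows essentially the same route as the paper's: decompose the cut $E_G(U, V\setminus U)$ into its restrictions to the edge-disjoint copies, apply $h(G')$ inside each copy (valid since $|U_i|\leq|U|\leq|V'|/2$), and sum, using $\sum_i|U_i|\geq|U|$. Your closing observation that each vertex in fact lies in exactly $d/d'$ copies (so $\sum_i|U_i|=\frac{d}{d'}|U|$, sharpening the bound to $h_{|V'|/2}(G)\geq h(G')$) is a nice strengthening that the paper does not record, though the weaker form is all that is used later.
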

For proving this claim, recall the definition of graph decomposition:
\begin{definition}[Graph decomposition]
We say that the set of graphs $\{G'_i = (V_i,E_i) \}_{i \in [l]}$
is an {\em edge-disjoint decomposition} of $G=(V,E)$ if
$V=\bigcup_i V_i$ and $E=\biguplus_i E_i$.
\end{definition}

\begin{proof}(of Claim \ref{clm:small-sets})
Let $U \subseteq V$ be of size $U \leq |V'|/2$. Let $\{G'_i =
(V_i,E_i) \}_{i \in [l]}$ be an edge-disjoint decomposition of
$G$, where every  $G_i$ is isomorphic to $G'$. Then
\begin{eqnarray*}
|E_G(U,V \setminus U)| &=& \sum_{i \in [l]} |E_{G'_i}(U_i, V_i
\setminus U_i)|
 \geq  \sum_{i \in [l]} h(G'_i)\cdot d' \cdot |U_i|
\\ &=&  h(G') \cdot d' \cdot \sum_{i \in [l]} |U_i|
 \geq  h(G') \cdot d' \cdot |U|~.
\end{eqnarray*}
Therefore
$
\frac{|E_G(U,V \setminus U)|}{d \cdot |U|}  \geq  h(G') \cdot \frac{d'}{d}~.
$
\end{proof}

\section{I/O-Complexity and  Edge Expansion}\label{sec:expansion}
In this section we recall the notion of computation graph of an algorithm,
then show how a partition argument connects the expansion
properties of the computation graph and the \io of the algorithm. A similar
partition argument already appeared in \cite{IronyToledoTiskin04},
and then in our \cite{BallardDemmelHoltzSchwartz11a}. In both
cases it is used to relate \io to the Loomis-Whitney geometric
bound \cite{LoomisWhitney49}, which can be viewed, in this
context, as an expansion guarantee for the corresponding graphs.

\subsection{The computation graph}

For a given algorithm, we consider the computation (directed)
graph $G=(V,E)$, where there is a vertex for each arithmetic
operation {\em (AO)} performed, and for every input element. $G$
contains a directed edge $(u,v)$, if the output operand of the AO
corresponding to $u$ (or the input element corresponding to $u$),
is an input operand to the AO corresponding to $v$. The in-degree
of any vertex of $G$ is, therefore, at most 2 (as the arithmetic operations are binary).
The out-degree is,
in general, unbounded\footnote{As the lower bounds
are derived for the bounded out-degree case, we will show how to convert the corresponding CDAG to obtain constant out-degree, without affecting the
 \io too much.}, i.e., it may be a function of $|V|$.  We
next show how an expansion analysis of this graph can be used to
obtain the \io lower bound for the corresponding algorithm.

\subsection{The partition argument}

Let $M$ be the size of the fast memory. Let $O$ be any total
ordering of the vertices that respects the partial ordering of the
CDAG $G$, i.e., all the edges are going up in the total order. This total
ordering can be thought of as the actual order in which the
computations are performed. Let $P$ be any partition of $V$ into
segments $S_1,S_2,...$, so that a segment $S_i \in P$ is a subset of
the vertices that are contiguous in the total ordering $O$.

Let $R_S$ and $W_S$ be the set of read and write operands,
respectively (see Figure \ref{fig:RW}). Namely, $R_S$ is the set
of vertices outside $S$ that have an edge going into $S$, and
$W_S$ is the set of vertices in $S$ that have an edge going
outside of $S$. Then the total \io due to reads of AOs in $S$ is
at least $|R_S|-M$, as at most $M$ of the needed $|R_S|$ operands
are already in fast memory when the execution of the segment's AOs
starts. Similarly, $S$ causes at least $|W_S|-M$ actual write
operations, as at most $M$ of the operands needed by other
segments are left in the fast memory when the execution of the
segment's AOs ends. The total \io is therefore bounded below
by\footnote{One can think of this as a game: the first player
orders the vertices. The second player partitions them into
contiguous segments. The objective of the first player (e.g., a
good programmer) is to order the vertices so that any consecutive
partitioning by the second player leads to a small communication
count.}

\begin{eqnarray}\label{eqn:RW} IO &\geq& \max_{P} \sum_{S \in P}
\lt( |R_S| + |W_S| - 2M \rt)~.
\end{eqnarray}

\begin{figure}[h]
\begin{center}
\psfig{file=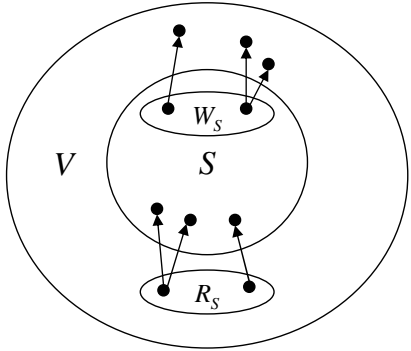 ,clip=, scale=0.7} \protect\caption{A subset
(segment) $S$ and its corresponding read operands $R_S$, and write
operands $W_S$. }\label{fig:RW}
\end{center}
\end{figure}

\subsection{Edge expansion and \io}

Consider a segment $S$ and its read and write operands $R_S$ and
$W_S$ (see Figure \ref{fig:RW}). If the graph $G$ containing $S$
has $h(G)$ edge expansion\footnote{The direction of the edges does
not matter much for the expansion-bandwidth argument: treating all
edges as undirected changes the \io estimate by a factor of 2 at
most. For simplicity, we will treat $G$ as undirected.}, maximum
degree $d$ and at least $2|S|$ vertices, then (using the definition
of $h(G)$), we have
\begin{claim}\label{clm:RW}
$|R_S|+|W_S| \geq \frac12 \cdot h(G) \cdot |S|$~.
\end{claim}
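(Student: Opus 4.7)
The plan is to chain two inequalities: a lower bound on the number of crossing edges coming from the edge-expansion hypothesis, and a matching upper bound coming from the maximum-degree hypothesis, with $|R_S|$ and $|W_S|$ as the degree receptacles.

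First I would verify that the expansion definition is applicable to $S$. By hypothesis $|V(G)| \geq 2|S|$, so $|S| \leq |V(G)|/2$, and therefore, treating $G$ as undirected as indicated in the footnote,
\[
|E_G(S, V \setminus S)| \;\geq\; h(G) \cdot d \cdot |S|.
\]

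Next I would upper bound the same quantity using $R_S$ and $W_S$. Every crossing edge has exactly one endpoint in $S$ and one in $V\setminus S$. In the underlying directed CDAG, that edge either (i) is directed from $V\setminus S$ into $S$, in which case its outside endpoint is, by definition, a member of $R_S$, or (ii) is directed from $S$ into $V\setminus S$, in which case its inside endpoint is, by definition, a member of $W_S$. Thus each crossing edge is ``charged'' to some vertex in $R_S \cup W_S$. Since every vertex has degree at most $d$, each vertex in $R_S \cup W_S$ can be charged by at most $d$ crossing edges, giving
\[
|E_G(S, V \setminus S)| \;\leq\; d \cdot \bigl( |R_S| + |W_S| \bigr).
\]

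Combining the two inequalities and cancelling $d$ yields $|R_S| + |W_S| \geq h(G)\cdot |S|$, which is in fact twice the claimed bound; the factor $\tfrac{1}{2}$ is slack, absorbing the ``factor of 2 at most'' overhead of the directed-to-undirected reduction flagged in the footnote. I do not anticipate a real obstacle here: the only subtle point is to be sure that the charge from a crossing edge to a vertex in $R_S\cup W_S$ is well-defined (each crossing edge is directed either in or out, and in either case picks out a unique endpoint in the appropriate set), so that the degree bound cleanly yields the upper bound on $|E_G(S, V\setminus S)|$.
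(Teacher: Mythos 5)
Your proof is correct and follows essentially the same strategy as the paper's: lower-bound the crossing edges via edge expansion, then upper-bound them via the degree constraint applied to $R_S$ and $W_S$. The one difference is in the counting step. The paper argues by pigeonhole that at least half the crossing edges touch $R_S$ or at least half touch $W_S$, which gives $\max\{|R_S|,|W_S|\} \geq \tfrac{1}{2d}\,|E(S,V\setminus S)|$ and hence the $\tfrac12$ factor in the stated claim. You instead charge each crossing edge to its unique distinguished endpoint in $R_S\cup W_S$ (the source if it leaves $S$, the target's source if it enters $S$), which is a clean injection and directly yields $|E(S,V\setminus S)| \leq d\,(|R_S|+|W_S|)$. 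That gives $|R_S|+|W_S|\geq h(G)\,|S|$, a factor-of-two improvement over the claim as stated. Both arguments are sound; yours is marginally tighter and avoids the pigeonhole. Neither affects the asymptotics downstream, since the $\tfrac12$ is absorbed into constants in (\ref{eqn:IO-general-expansion-condition})--(\ref{eqn:IO-general-expansion}).
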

\begin{proof}
We have $|E(S, V \setminus S)| \geq h(G) \cdot d \cdot |S|$.
Either (at least) half of the edges $E(S, V \setminus S)$ touch
$R_S$ or half of them touch $W_S$. As every vertex is of degree
$d$, we have $|R_S|+|W_S| \geq \max\{|R_S|,|W_S|\} \geq \frac 1d
\cdot \frac12 \cdot |E(S, V \setminus S)| \geq h(G) \cdot |S| /2$.
\end{proof}

\noindent
Combining this with (\ref{eqn:RW}) and choosing to partition
$V$ into $|V|/s$ segments of equal size $s$, we obtain:
$
IO \geq  \max_{s}\frac{|V|}{s} \cdot \lt( \frac{h(G) \cdot s}{2}
- 2M \rt) = \Omega\lt(|V|\cdot h(G) \rt)
$. In many cases $h(G)$ is too small to attain the desired \io lower
bound. Typically, $h(G)$ is a decreasing function in $|V(G)|$,
namely the edge expansion deteriorates with the increase of the
input size and with the running time of the corresponding
algorithm. This is the case with matrix multiplication algorithms:
the cubic, as well as the Strassen and ``Strassen-like'' algorithms.
In such cases, it is better to consider the expansion of $G$ on
small sets only:
$
IO \geq  \max_{s}\frac{|V|}{s} \cdot \lt( \frac{h_s(G) \cdot
s}{2} - 2M \rt)
$.
Choosing\footnote{The existence of a value $s$ that satisfies the
condition is not always guaranteed. In the next section we confirm
this for Strassen, for sufficiently large $|V(G)|$ (in particular,
$|V(G)|$ has to be larger than $M$). Indeed this is the
interesting case, as otherwise all computations can be performed
inside the fast memory, with no communication, except for reading
the input once.} the minimal $s$ so that
\begin{eqnarray}\label{eqn:IO-general-expansion-condition}
\frac{h_s(G) \cdot s}{2} \geq 3M
\end{eqnarray} we obtain
\begin{eqnarray}\label{eqn:IO-general-expansion}
IO &\geq&  \frac{|V|}{s} \cdot M~.
\end{eqnarray}

In some cases, the computation graph $G$ does not fit this
analysis: it may not be regular, it may have vertices of unbounded
degree, or its edge expansion may be hard to analyze. In such
cases, we may consider some subgraph $G'$ of $G$ instead to
obtain a lower bound on the \io:
\begin{claim}\label{clm:subgraph}
Let $G=(V,E)$ be a computation graph of an algorithm $Alg$. Let
$G'=(V',E')$ be a subgraph of $G$, i.e., $V' \subseteq V$
%% and $E' = \{(u,v)~|~ u,v \in V', (u,v) \in E \}$.
and $E'  \subseteq E$. If $G'$ is $d$-regular and $\alpha =
\frac{|V'|}{|V|}$, then the \io of $Alg$ is

\begin{equation}\label{eqn:IO-expansion-subgraph}
IO \geq \frac{\alpha}{2} \cdot \frac{|V|}{s} \cdot M
\end{equation}
where $s$ is chosen so that ${h_s(G') \cdot \alpha s \over 2} \geq 3M~
$.
\end{claim}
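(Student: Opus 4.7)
The plan is to recycle the derivation that led to (\ref{eqn:IO-general-expansion}), but with the partition adapted so that the small-set expansion of the subgraph $G'$ (rather than of $G$ itself) can be invoked at every segment. Since $G'$ need not be regular as a subgraph embedded in $G$, I cannot simply apply the earlier bound to $G'$; I have to translate expansion statements about $V'$ into cut statements about $V$.

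Starting from the partition bound (\ref{eqn:RW}), I would fix any total order $O$ of $V$ that respects $G$, and construct a specific partition $P$ greedily: sweep through $V$ in the order $O$, closing off the current segment (and opening a new one) each time the number of $V'$-vertices placed into it reaches $s$. This yields at least $\lfloor |V'|/s \rfloor = \lfloor \alpha|V|/s \rfloor$ ``full'' segments, each containing exactly $s$ vertices of $V'$; any leftover tail segment is just dropped from the sum below.

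For a full segment $S$, write $S' = S \cap V'$, so $|S'| = s$. The key observation is that every edge of $G'$ between $S'$ and $V' \setminus S'$ is also an edge of $G$ between $S$ and $V \setminus S$, because $V' \subseteq V$ and $E' \subseteq E$. Applying the small-sets definition of $h_s(G')$ to $S'$ (assuming $s \leq |V'|/2$, the only interesting regime) gives $|E_{G'}(S', V' \setminus S')| \geq h_s(G') \cdot d \cdot s$, and then the degree-$d$ halving argument from the proof of Claim \ref{clm:RW} — which is legal because $G'$, and hence every endpoint of these edges, is $d$-regular — yields $|R_S| + |W_S| \geq h_s(G') \cdot s / 2$. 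Combined with the hypothesis $h_s(G') \cdot \alpha s / 2 \geq 3M$, rearranged to $h_s(G') \cdot s / 2 \geq 3M/\alpha$, each full segment contributes at least $3M/\alpha - 2M \geq M/\alpha$ to (\ref{eqn:RW}) (using $\alpha \leq 1$, which is automatic since $V' \subseteq V$). Summing,
\[
IO \;\geq\; \left\lfloor \frac{\alpha|V|}{s} \right\rfloor \cdot \frac{M}{\alpha} \;\geq\; \frac{\alpha|V|}{2s} \cdot \frac{M}{\alpha} \;=\; \frac{|V|\,M}{2s} \;\geq\; \frac{\alpha}{2} \cdot \frac{|V|}{s} \cdot M,
\]
where $\lfloor x \rfloor \geq x/2$ is used for $x = \alpha|V|/s \geq 2$, which is the only nontrivial regime.

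The main obstacle is the mismatch between expansion, a property of $G'$, and communication, which counts crossings of $S$ in $G$. The containment $V' \subseteq V$, $E' \subseteq E$ bridges this: a $G'$-edge with one endpoint in $S'$ and the other in $V' \setminus S'$ automatically crosses the $(S, V \setminus S)$-cut in $G$, so expansion inside $V'$ lower-bounds the $G$-cut. The $d$-regularity of $G'$ is what enables the Claim \ref{clm:RW}-style halving, since it caps the degree of the cut-endpoints and lets one pass from an edge count to a vertex count in $R_S, W_S$. Finally, it is essential to partition so that each segment intersects $V'$ in exactly $s$ vertices — partitioning $V$ into equal pieces and invoking $h_s$ via an averaging argument would only control the sum, not each individual $|R_S|+|W_S|$, and the $-2M$ deduction in (\ref{eqn:RW}) would then be unaffordable.
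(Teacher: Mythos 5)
Your proof is correct, and it takes a genuinely different partitioning route than the paper. The paper partitions $V$ into $|V|/s$ equal-size segments of $s$ vertices each, then uses a pigeonhole/averaging argument to show that at least an $\alpha/2$ fraction of the segments have at least $\frac{\alpha}{2}s$ of their vertices in $V'$; only those ``good'' segments carry the bound. You instead partition $V$ greedily so that every (full) segment contains exactly $s$ vertices of $V'$, with segment sizes allowed to vary --- which the partition inequality (\ref{eqn:RW}) permits, since it imposes no constraint on segment cardinalities. This eliminates the averaging step entirely: every full segment is automatically ``good'' with $|S\cap V'| = s$. Your partition also sidesteps a constant-factor snag in the paper's calculation: with only $|S\cap V'|\geq\frac{\alpha}{2}s$ guaranteed, the degree-halving bound yields $|R_S|+|W_S|\geq\frac{h_s(G')\cdot\alpha s}{4}\geq\frac{3M}{2}$ under the stated hypothesis, which does not exceed the $2M$ subtrahend; in your scheme each full segment gives $|R_S|+|W_S|\geq\frac{h_s(G')\cdot s}{2}\geq\frac{3M}{\alpha}\geq 3M$, cleanly beating $2M$, and the resulting bound $IO\geq\frac{|V|M}{2s}$ is in fact a factor $1/\alpha$ stronger than the claimed $\frac{\alpha}{2}\cdot\frac{|V|}{s}\cdot M$. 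Both arguments rest on the same two observations you highlight: that an $E'$-edge crossing the cut $(S',V'\setminus S')$ in $G'$ is an $E$-edge crossing $(S,V\setminus S)$ in $G$, and that $d$-regularity of $G'$ licenses the Claim~\ref{clm:RW}-style passage from an edge count to a bound on $|R_S|+|W_S|$.
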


\noindent The correctness of this claim follows from Equations
(\ref{eqn:IO-general-expansion-condition}) and
(\ref{eqn:IO-general-expansion}), and from the fact that at least
an $\alpha/2$ fraction of the segments have at least $\frac{\alpha}{2} \cdot
s$ of their vertices in $G'$ (otherwise $V' <
\frac{\alpha}{2}\cdot V/s \cdot s + (1 - \frac{\alpha}{2})\cdot V/s \cdot \frac{\alpha}{2} s
< \alpha V
$). We therefore have:
\begin{lemma}\label{lem:central}
Let $Alg$ be an algorithm with $AO(N)$ arithmetic
operations ($N$ being the total input size, $N=\Theta(n^2)$ for matrix multiplication) and computation graph
$G(N)=(V,E)$. Let $G'(N)=(V',E')$ be a regular constant degree
subgraph of $G$, with $\frac{|V'|}{|V|} = \Theta(1)$. Then the \io
of $Alg$\footnote{In Strassen's algorithm, $N=2n^2$ is the number
of input matrices elements and $T(N) = \Theta\lt(n^{\omega_0}\rt) =
\Theta\lt(N^{\omega_0/2} \rt)$. $G'$ is the graph $Dec_kC$ for $k =
\lg M$, see Section \ref{sec:strassen} for the definition of
$Dec_kC$.} on a machine with fast memory of size $M$ is
\begin{equation}\label{eqn:central-basic}
IO = \Omega\lt(|V'| \cdot h_s(G'(N)) \rt) \;\; \text{ for} \;\; s
= AO(M) ~.
\end{equation}
As $AO(N) = \Theta(|V'|)$ and $h_s(G'(N))$ for $s
= AO(M)$ is $\Theta(h(G'(M)))$ (recall Claim \ref{clm:small-sets}) we obtain, equivalently,
\begin{equation}\label{eqn:central-nice}
IO = \Omega\lt(AO(N) \cdot h(G'(M)) \rt)~.
\end{equation}

\end{lemma}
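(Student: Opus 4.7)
The plan is to derive the lemma as a routine specialization of Claim \ref{clm:subgraph}, with the segment size chosen as $s = AO(M)$ and the small-set expansion of $G'(N)$ identified with that of $G'(M)$ via Claim \ref{clm:small-sets}.

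I would begin by invoking Claim \ref{clm:subgraph} with $\alpha = |V'|/|V| = \Theta(1)$, which yields
\[
IO \;\geq\; \frac{\alpha}{2}\cdot\frac{|V|}{s}\cdot M
\]
for any $s$ satisfying $h_s(G')\cdot \alpha s /2 \geq 3M$. Since $\alpha$ is a positive constant, this condition is equivalent, up to constants, to $h_s(G'(N))\cdot s = \Omega(M)$.

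Next, I would set $s = AO(M)$ and use the recursive, ``Strassen-like'' structure implicit in the setting: $G'(N)$ admits an edge-disjoint decomposition into copies of $G'(M)$, each on $\Theta(AO(M))$ vertices. Applying Claim \ref{clm:small-sets} to this decomposition gives
\[
h_{AO(M)}(G'(N)) \;=\; \Theta\!\left(h(G'(M))\right),
\]
a quantity independent of $N$. Assuming the algorithm-specific relation $h(G'(M))\cdot AO(M) = \Theta(M)$ (to be established for Strassen in Section \ref{sec:strassen}), the hypothesis of Claim \ref{clm:subgraph} is met at $s = AO(M)$ with constant slack, and substituting yields
\[
IO \;\geq\; \frac{\alpha}{2}\cdot\frac{|V|}{AO(M)}\cdot M \;=\; \Omega\!\left(|V'|\cdot h_{AO(M)}(G'(N))\right),
\]
which is equation (\ref{eqn:central-basic}). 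The equivalent form (\ref{eqn:central-nice}) follows immediately: $|V'| = \Theta(AO(N))$, since $G'$ is a constant-degree subgraph carrying a constant fraction of the $\Theta(AO(N))$ AO-vertices of $G$, and the small-set expansion identity above replaces $h_{AO(M)}(G'(N))$ by $h(G'(M))$.

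The main obstacle, and indeed the only substantive ingredient not already supplied by Claims \ref{clm:subgraph} and \ref{clm:small-sets}, is the recursive decomposition of $G'(N)$ into copies of $G'(M)$ together with the estimate $h(G'(M))\cdot AO(M) = \Theta(M)$; these are genuinely algorithm-specific and will form the content of Section \ref{sec:strassen} for Strassen's CDAG. Once they are in hand, the two claims combine into the lemma by the bookkeeping above.
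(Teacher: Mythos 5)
Your proof is correct and follows essentially the same route as the paper, which derives the lemma directly from Claim \ref{clm:subgraph} (itself built on the partition argument) and uses Claim \ref{clm:small-sets} to pass from $h_s(G'(N))$ to $h(G'(M))$. You usefully make explicit two conditions the paper leaves implicit in the lemma statement and verifies only in the Strassen instantiation (the corollary preceding the lemma): that $G'(N)$ decomposes into edge-disjoint copies of $G'(M)$, and that $h(G'(M))\cdot AO(M) = \Theta(M)$ so that $s = AO(M)$ satisfies the hypothesis of Claim \ref{clm:subgraph}.
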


\section{Expansion Properties of Strassen's Algorithm}\label{sec:strassen}

Recall Strassen's algorithm for matrix multiplication (see
Algorithm \ref{alg:strassen} in  Appendix  \ref{sec:strassen-cited})
and consider its computation graph (see Figure \ref{fig:comp-graph}).
Let $H_{i}$ be computation graph of Strassen's algorithm for recursion of depth $i$, hence
$H_{\lg n}$ corresponds to the computation for
input matrices of size $n \times n$. $H_{\lg n}$  has the following
structure:
\begin{itemize}
\item
Encode $A$: generate weighted sums of elements of $A$
(this corresponds to the left factors of lines 5-11 of the algorithm).
\item
Similarly encode $B$ (this corresponds to the right factors of lines 5-11 of the algorithm).
\item
Then multiply the encodings of $A$ and $B$
element-wise (this corresponds to line 2 of the algorithm).
\item
Finally, decode $C$, by taking weighted sums of the
products (this corresponds to lines 12-15 of the algorithm).
\end{itemize}

\begin{figure}[th!]
\begin{center}
\psfig{file=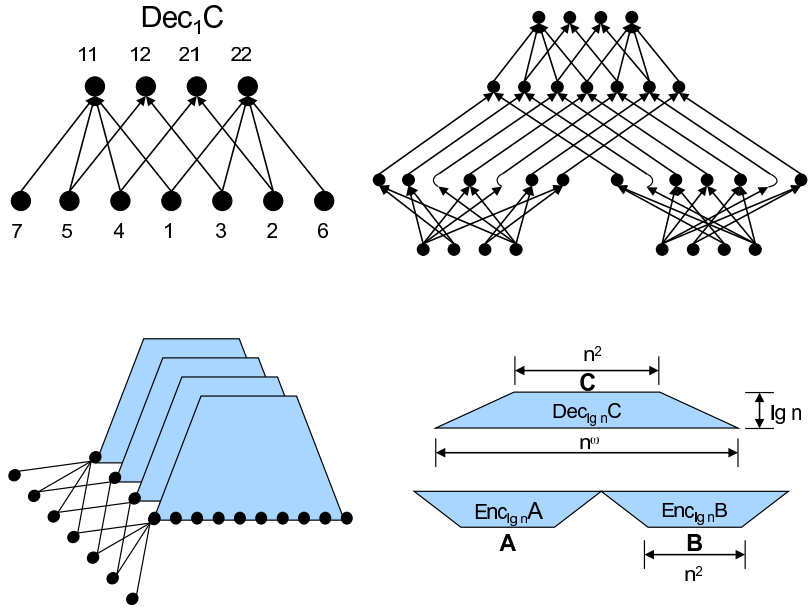 ,clip=, scale=0.6} \protect\caption{The
computation graph of Strassen's algorithm (See Algorithm
\ref{alg:strassen} in Appendix).
\newline Top left:     $Dec_1C$. Top right:    $H_1$.
Bottom left:  $Dec_{\lg n}C$. Bottom right: $H_{\lg n}$.
}\label{fig:comp-graph}
\end{center}
\end{figure}

\begin{comment}\label{cmt:constant-in-degree}
$Dec_{1}C$ is presented, for simplicity, with vertices of in-degree larger than two (but constant).
A vertex of degree larger than two, in fact, represents a full binary (not necessarily balanced) tree.
Note that replacing these high in-degree vertices
with trees changes the edge expansion of the graph by a constant factor at most
(as this graph is of constant size, and connected).
Moreover, there is no change in the number of input and output vertices.
Therefore the arguments in the following proof of Lemma \ref{lem:Dec-expansion} still hold.
\end{comment}

\subsection{The computation graph for $n$-by-$n$ matrices}
Assume w.l.o.g.~that $n$ is an integer power of $2$. Denote by
$Enc_{\lg n}A$ the part of $H_{\lg n}$ that corresponds to the encoding
of matrix $A$. Similarly, $Enc_{\lg n}B$, and $Dec_{\lg n}C$
correspond to the parts of $H_{\lg n}$ that compute the encoding
of $B$ and the decoding of $C$, respectively.

%% top-down construction %%
\subsubsection{A top-down construction of the computation graph}
We next construct the computation graph $H_{i+1}$ by constructing $Dec_{i+1}C$ (from $Dec_i C$ and $Dec_1 C$)
and similarly constructing $Enc_{i+1}A$ and $Enc_{i+1}B$, then composing the three parts together.
\begin{itemize}
\item
Duplicate $Dec_{1}C$ $7^i$ times.
\item
Duplicate $Dec_{i}C$ four times.
\item
Identify the $4 \cdot 7^i$ output vertices of the copies of
$Dec_{1}C$ with the $4 \cdot 7^i$ input vertices of the copies of $Dec_{i}C$:
\begin{itemize}
\item
Recall that each $Dec_{1}C$ has four output vertices.
\item
The first output vertex of the $7^i$
$Dec_{1}C$ graphs are identified with the $7^i$ input vertices of the first copy of $Dec_{i}C$.
\item
The second output vertex of the $7^i$
$Dec_{1}C$ graphs are identified with the $7^i$ input vertices of the second copy of $Dec_{i}C$.
And so on.
\item
We make sure that the $j$th input vertex of a copy of $Dec_{i}C$ is identified
with an output vertex of the $j$th copy of $Dec_{1}C$.
\end{itemize}
\item
We similarly obtain $Enc_{i+1}A$ from $Enc_{i}A$ and $Enc_{1}A$,
\item
and $Enc_{i+1}B$ from $Enc_{i}B$ and $Enc_{1}B$.
\item
For every $i$, $H_{i}$ is obtained by connecting edges from the $j$th output vertices of $Enc_{i}A$ and $Enc_{i}B$
to the $j$th input vertex of $Dec_{i}C$.
\end{itemize}
This completes the construction. Let us note some properties of these graphs.

The graph $Dec_1C$ has no vertices which are both input
and output. As all out-degrees are at most 4 and all in degree are at most 2 (Recall Comment \ref{cmt:constant-in-degree}) we have:
\begin{fact}\label{fct:constant-degree}
All vertices of $Dec_{\lg n}C$ are of degree at most $6$.
\end{fact}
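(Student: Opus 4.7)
The plan is to prove the fact by induction on the recursion depth $i$, showing that for every $i \geq 1$ each vertex of $Dec_{i}C$ has in-degree at most $2$ and out-degree at most $4$, giving total degree at most $6$.

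For the base case, I would examine $Dec_1 C$ directly. By definition it computes the four output entries $C_{11}, C_{12}, C_{21}, C_{22}$ as fixed linear combinations of the seven products $M_1, \dots, M_7$. After the conversion described in Comment \ref{cmt:constant-in-degree}, where each high-in-degree output vertex is replaced by a binary tree of additions, every vertex has in-degree at most $2$. The out-degree of each $M_k$ is at most $4$ because $M_k$ contributes to at most the four entries of $C$ (and this is unchanged after the binary-tree expansion, since that transformation only increases out-degrees inside the tree, where degrees were already $\leq 2$). The addition vertices introduced by the binary trees similarly have out-degree at most $1$ inside the tree, plus possibly one more edge, well within the bound. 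Hence every vertex of $Dec_1 C$ has degree at most $6$.

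For the inductive step, suppose every vertex of $Dec_{i}C$ has in-degree $\leq 2$ and out-degree $\leq 4$. The construction of $Dec_{i+1}C$ takes $7^i$ disjoint copies of $Dec_1 C$ and $4$ disjoint copies of $Dec_{i}C$ and then identifies each output vertex of a $Dec_1 C$ copy with exactly one input vertex of a $Dec_{i}C$ copy. Three cases need to be checked:
\begin{itemize}
\item A vertex that remains an input of $Dec_{i+1}C$ (an input of some $Dec_1 C$ copy, not touched by the identification) has in-degree $0$ and out-degree $\leq 4$ by the base case.
\item A vertex that remains an output of $Dec_{i+1}C$ (an output of some $Dec_{i}C$ copy) has in-degree $\leq 2$ and out-degree $0$ by the inductive hypothesis.
\item A vertex at the identification layer was simultaneously an output of a $Dec_1 C$ copy (contributing in-degree $\leq 2$ and out-degree $0$ on that side) and an input of a $Dec_{i}C$ copy (contributing in-degree $0$ and out-degree $\leq 4$ on that side). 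Summing, the total in-degree is $\leq 2$ and the total out-degree is $\leq 4$.
\end{itemize}
In every case the degree is at most $6$, completing the induction. Setting $i = \lg n$ yields Fact \ref{fct:constant-degree}.

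I do not expect any real obstacle: the only subtlety is verifying that the identification at the middle layer does not accidentally create vertices of large degree, which is precisely why the construction identifies outputs (no outgoing edges) with inputs (no incoming edges). The binary-tree expansion of Comment \ref{cmt:constant-in-degree} is handled once at the base case and is not disturbed by the recursive gluing.
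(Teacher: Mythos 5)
Your proof is correct and takes the same approach as the paper, which states the fact with a one-line justification: $Dec_1C$ has no vertices that are simultaneously input and output, all in-degrees are at most $2$, and all out-degrees are at most $4$. Your induction on the recursion depth simply fleshes out that one-liner, and you correctly identify the crux — the gluing joins out-degree-$0$ vertices (outputs of $Dec_1C$) to in-degree-$0$ vertices (inputs of $Dec_iC$) — which is exactly the reason the paper opens with ``$Dec_1C$ has no vertices which are both input and output.''
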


 However, $Enc_1A$
and $Enc_1B$ have vertices which are both input and output (e.g.,
$A_{11}$), therefore $Enc_{\lg n}A$ and $Enc_{\lg n}B$ have vertices
of out-degree $\Theta(\lg n)$. All in-degrees are at most $2$, as
an arithmetic operation has at most two inputs.

As $H_{\lg n}$ contains vertices of large degrees, it is easier to
consider  $Dec_{\lg n}C$: it contains only vertices of constant
bounded degree,  yet at least one third of the vertices of $H_{\lg
n}$ are in it.

\begin{lemma}{\sc (Main lemma)}\label{lem:Dec-expansion}
The edge expansion of $Dec_{k}C$ is $$h(Dec_{k}C)=\Omega\lt(\lt(\frac{4}{7}\rt)^{k}\rt)$$
\end{lemma}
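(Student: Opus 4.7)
\emph{Proof plan.} The plan is to proceed by induction on $k$. The base case $k=1$ is immediate: $Dec_1 C$ is a connected graph of constant size with bounded maximum degree, so $h(Dec_1 C) = \Omega(1)$, and one may choose the hidden constant in the $\Omega((4/7)^k)$ bound so that the case $k=1$ is covered.

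For the inductive step, fix any $U \subseteq V(Dec_{k+1} C)$ with $|U| \leq |V(Dec_{k+1} C)|/2$. Following the paper's top-down construction, the edge set of $Dec_{k+1} C$ partitions into the edges of the $7^k$ top copies $\{H_i\}$ of $Dec_1 C$ and the $4$ bottom copies $\{G_j\}$ of $Dec_k C$; their vertex sets overlap only at interface vertices, each of which belongs to exactly one $H_i$ (as an output) and one $G_j$ (as an input). Writing $U^{top}_i := U \cap V(H_i)$ and $U^{bot}_j := U \cap V(G_j)$, and applying the expansion of each piece---the constant base-case $h_1 := h(Dec_1 C)$ to each $H_i$ and the inductive hypothesis $h_k \geq c(4/7)^k$ to each $G_j$, after regularizing to common degree $d$ by adding loops as justified by Fact~\ref{fct:constant-degree}---one obtains
\begin{equation*}
|E(U, V \setminus U)| \geq h_1 d \sum_i \min(|U^{top}_i|, v_1 - |U^{top}_i|) + h_k d \sum_j \min(|U^{bot}_j|, N_k - |U^{bot}_j|),
\end{equation*}
where $v_1 := |V(Dec_1 C)|$ and $N_k := |V(Dec_k C)|$. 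The target is to show this is at least $c (4/7)^{k+1} d |U|$.

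The proof then splits into cases. In the \emph{balanced case}, where $|U^{bot}_j| \leq N_k/2$ for every $j$, one may replace each $\min$ by its first argument and use the identity $\sum_i |U^{top}_i| + \sum_j |U^{bot}_j| = |U| + |U \cap \mathrm{interface}|$ together with $h_1 \geq h_k$ to obtain $|E(U, V \setminus U)| \geq h_k d |U|$, which is strictly stronger than required. In the \emph{unbalanced case}, some $|U^{bot}_j| > N_k/2$, and the per-copy bound for that $G_j$ only controls the deficit $N_k - |U^{bot}_j|$; this deficit must be absorbed using the cardinality constraint $|U| \leq |V(Dec_{k+1} C)|/2$ together with the asymptotic size ratio $N_k / |V(Dec_{k+1} C)| \to 1/7$. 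The factor $4/7$ emerges precisely from this ratio: the bottom vertices form roughly a $4/7$-fraction of $Dec_{k+1} C$, so the worst-case unbalanced cut can concentrate at most a $4/7$-fraction of $U$ in a region where the bottom boundary bound degrades. The main obstacle is formalizing the unbalanced case: because the top copies have constant size, their per-copy bounds are only useful when $U$ non-trivially splits each $H_i$, and one must argue that any unbalanced configuration on the bottom forces sufficiently many top copies to be non-trivially split (via the interface), so that the combined top and bottom contributions sum to at least $c(4/7)^{k+1} d |U|$.
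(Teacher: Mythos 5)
Your plan is a genuine departure from the paper's proof: the paper does not argue by induction on $k$ at all. It works directly with the layer structure $l_1, \dots, l_{k+1}$ of $Dec_kC$, showing that either some layer's $S$-density deviates noticeably from the overall density (Claim~\ref{clm:S-homogeneity}) or, failing that, the fact that the $4^k$ singleton blocks at the output level have density $0$ or $1$ (Fact~\ref{fct:heterogeneity}) forces many crossing edges via a telescoping sum along the tree $T_k$. Your inductive step, by contrast, plugs the single scalar $h_k$ into the four bottom copies and $h_1$ into the top copies, and this loses too much. First, in your ``balanced case,'' replacing each top-copy $\min(|U^{top}_i|, v_1 - |U^{top}_i|)$ by its first argument is not a valid lower-bound manipulation: when $|U^{top}_i| > v_1/2$ (in particular when $H_i$ lies entirely inside $U$) the minimum is the second argument, which can be $0$. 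Concretely, take $U$ to be the disjoint union of about half of the $7^k$ top $Dec_1C$-copies, each included in its entirety. Then $|U| \approx \tfrac{11}{2}\,7^k \le |V(Dec_{k+1}C)|/2$, every $|U^{bot}_j| \approx \tfrac12\,7^k \le N_k/2$ (so this is your balanced case), yet every top $\min$ is $0$ and $\sum_j |U^{bot}_j| \approx 2\cdot 7^k$, far below $|U|$, so the claimed inequality $\sum_i\min + \sum_j|U^{bot}_j| \ge |U|$ fails. Worse, the lower bound your step actually produces for this $U$ is only $h_k\,d \cdot 2\cdot 7^k = \Theta\bigl((4/7)^k\bigr)\,d\cdot 7^k = \Theta(d\,4^k)$, which falls short (by a constant factor) of the required $c(4/7)^{k+1}d|U| \approx \tfrac{22}{7}\,cd\,4^k$; the induction does not close.

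The deeper obstruction is that $h(Dec_kC)$ is the \emph{worst-case} expansion over all sets of size up to $N_k/2$, whereas for the sets $U^{bot}_j$ arising above (which live entirely in the top layer $l_{k+1}$ of each $G_j$) the true boundary is $\Theta(7^k)$, exponentially larger than the $h_k\,d\,|U^{bot}_j| = \Theta(4^k)$ that your hypothesis supplies. An induction that closes would therefore have to carry a size-dependent expansion profile $s \mapsto h_s(Dec_kC)$ (cf.\ Claim~\ref{clm:small-sets}) as its hypothesis, and controlling how that profile transforms across one recursion step is essentially the content of the paper's homogeneity/heterogeneity argument. Your base case, the edge-disjoint decomposition into top $Dec_1C$-copies and bottom $Dec_kC$-copies, and the observation that the $4/7$ vertex-fraction is where the geometric factor comes from are all sound; but the inductive step as written has a gap that concrete cut sets fall into.
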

\noindent The proof follows below, but first note that it suffices
to deduce the expansion of $Dec_{\lg n}C$ on small sets.
Assume w.l.o.g.~that $n$ is an integer power of
$\sqrt{M}$.\footnote{We may assume this, as we are dealing with a
lower bound here, so it suffices to prove the assertion for an
infinite number of $n$'s. Alternatively, in the following
decomposition argument, we leave out a few of the top or bottom
levels of vertices of $Dec_{\lg n}C$, so that $n$ is an integer
power of $\sqrt{M}$ and so that at most $|S|/2$ vertices of $S$
are cut off. } Then $Dec_{\lg n}C$ can be split into edge-disjoint copies of $Dec_{\frac12 \lg M  }C$. Using Claim
\ref{clm:small-sets}, we thus have:
\begin{corollary}
$s \cdot h_{s}(Dec_{\lg n}C) \geq 3M$ for $s = 9 \cdot M^{\lg 7 / 2}$.
\end{corollary}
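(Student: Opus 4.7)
The plan is to derive the corollary directly from the Main Lemma via a single invocation of Claim \ref{clm:small-sets}. Setting $k = \tfrac12 \lg M$, the paragraph preceding the corollary already observes that $Dec_{\lg n}C$ admits an edge-disjoint decomposition into isomorphic copies of $Dec_k C$ (this uses only the top-down recursive description of the graph and the standing assumption that $n$ is a power of $\sqrt{M}$). This is exactly the hypothesis of Claim \ref{clm:small-sets} with $G := Dec_{\lg n}C$ and $G' := Dec_k C$, so I would apply it directly.

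By Fact \ref{fct:constant-degree}, both $G$ and $G'$ are of bounded maximum degree $6$; after regularizing by self-loops as in Section \ref{sec:preliminaries}, the ratio $d'/d$ appearing in Claim \ref{clm:small-sets} is an absolute positive constant. The claim then yields
\begin{equation*}
h_s\bigl(Dec_{\lg n}C\bigr) \;\geq\; \tfrac{d'}{d}\, h\bigl(Dec_k C\bigr)
\qquad \text{for every } s \leq |V(Dec_k C)|/2.
\end{equation*}
A direct count along the recursive construction of $Dec_k C$ gives $|V(Dec_k C)| = \Theta(7^k) = \Theta(M^{\lg 7/2})$, so that $s = 9 \cdot M^{\lg 7/2}$ lies in the admissible range once the hidden constant in this $\Theta$ is pinned down. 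The Main Lemma then supplies
\begin{equation*}
h\bigl(Dec_k C\bigr) \;=\; \Omega\bigl((4/7)^k\bigr) \;=\; \Omega\bigl(M^{(2-\lg 7)/2}\bigr) \;=\; \Omega\!\left(\tfrac{M}{M^{\lg 7/2}}\right),
\end{equation*}
since $\lg 4 - \lg 7 = 2 - \lg 7$ and $k = \tfrac12 \lg M$.

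Multiplying the two bounds gives
\begin{equation*}
s \cdot h_s\bigl(Dec_{\lg n}C\bigr) \;\geq\; 9\,M^{\lg 7/2} \cdot \tfrac{d'}{d}\, \Omega\!\left(\tfrac{M}{M^{\lg 7/2}}\right) \;=\; \Omega(M),
\end{equation*}
and the factor $9$ on the left is exactly the slack needed to push the right-hand side up to $3M$ once the absolute constants in the Main Lemma and in the degree bound of Fact \ref{fct:constant-degree} are made explicit. The only obstacle I anticipate is pure bookkeeping: tracking the hidden constant in the Main Lemma, the degree-ratio factor $d'/d$, and verifying the size bound $|V(Dec_k C)| \geq 18\,M^{\lg 7/2}$ needed so that $s = 9\,M^{\lg 7/2}$ counts as a ``small set'' in Claim \ref{clm:small-sets}. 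Each of these is a routine verification once the explicit recursive description of the graph is in hand.
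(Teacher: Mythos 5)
Your proposal matches the paper's own (terse, inline) derivation exactly: decompose $Dec_{\lg n}C$ into edge-disjoint copies of $Dec_{\frac12\lg M}C$, apply Claim~\ref{clm:small-sets} to transport the Main Lemma's bound $h(Dec_k C)=\Omega\bigl((4/7)^k\bigr)=\Omega\bigl(M^{1-\lg 7/2}\bigr)$ to the small-set expansion $h_s$ of the big graph, and multiply by $s$. One caveat on the bookkeeping you defer to the end: at $k=\frac12\lg M$ one has $|V(Dec_k C)|\approx\frac73\cdot 7^k=\frac73\,M^{\lg 7/2}$, so the size bound $|V(Dec_k C)|\geq 18\,M^{\lg 7/2}$ you would need for $s=9\,M^{\lg 7/2}$ to fall within the reach of Claim~\ref{clm:small-sets} fails as written --- the fix is to take $k$ a constant number of levels larger (costing only a constant factor inside the Main Lemma's $\Omega$) or to shrink the constant $9$, a gloss the paper itself does not spell out.
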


\noindent  As $Dec_{\lg n}C$ contains $\alpha=\frac13$ of the vertices of
$H_{\lg n}$,
Lemma \ref{lem:central} now yields Main
Theorem \ref{thm:main}. Note that $Dec_{\lg n}C$ has no input
vertices, so no restriction on input replication is needed.

\subsubsection{Combinatorial  Estimation of the Expansion}\label{sec:expansion-combinatorial}
\begin{proof}[of Lemma \ref{lem:Dec-expansion}]
Let $G_k=(V,E)$ be $ Dec_kC$, and let $S \subseteq V, |S| \leq
|V|/2$. We next show that $|E(S,V \setminus S)| \geq c \cdot d
\cdot |S| \cdot \lt(\frac47 \rt) ^k $, where $c$ is some universal
constant, and $d$ is the constant degree of $Dec_kC$ (after
adding loops  to make it regular).

The proof works as follows. Recall that $G_k$ is a layered graph (with layers corresponding to recursion steps),
so all edges (excluding loops)
connect between consecutive levels of vertices. We argue (in Claim \ref{clm:S-homogeneity}) that each level of
$G_k$ contains about the same fraction of $S$ vertices, or else we have
many edges leaving $S$. We also observe (in Fact \ref{fct:heterogeneity}) that such
homogeneity (of a fraction of $S$ vertices) does not hold between
distinct parts of the lowest level, or, again, we have many
edges leaving $S$. We then show that the homogeneity between levels,
combined with the heterogeneity of the lowest level,
guarantees that there are many edges leaving $S$.

Let $l_i$ be the $i$th level of vertices of $G_k$, so $4^k = |l_1|
< |l_2| < \cdots < |l_i| = 4^{k-i+1}7^{i-1} < \cdots < |l_{k+1}| = 7^k$.
Let $S_i \equiv S \cap l_i$. Let $\sigma = \frac{|S|}{|V|}$ be the fractional size
of $S$ and $\sigma_i = \frac{|S_i| }{|l_i|}$ be the fractional size of $S$ at level $i$.
Due to averaging, we observe the following:
\begin{fact} There
exist $i$ and $i'$ such that $\sigma_i \leq \sigma \leq \sigma_{i'}$.
\end{fact}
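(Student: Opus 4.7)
The plan is to prove this by a one-line averaging (pigeonhole) argument. Since the levels $l_1, \ldots, l_{k+1}$ partition the vertex set $V$ of $G_k = Dec_kC$, and $S_i = S \cap l_i$ partitions $S$ accordingly, I would start from the identities $|S| = \sum_{i=1}^{k+1} |S_i|$ and $|V| = \sum_{i=1}^{k+1} |l_i|$, substitute $|S_i| = \sigma_i |l_i|$, and divide to write
\[
\sigma \;=\; \frac{|S|}{|V|} \;=\; \frac{\sum_{i=1}^{k+1} \sigma_i\, |l_i|}{\sum_{i=1}^{k+1} |l_i|}.
\]
This exhibits $\sigma$ as a convex combination of $\sigma_1, \ldots, \sigma_{k+1}$ with positive weights $|l_i|/|V|$, using the fact that every level is nonempty (by the explicit formula $|l_i| = 4^{k-i+1} 7^{i-1}$ for the layered structure of $Dec_kC$).

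From there, I would invoke the elementary observation that any convex combination of real numbers lies between their minimum and maximum. Choosing $i \in \arg\min_{1 \le j \le k+1} \sigma_j$ and $i' \in \arg\max_{1 \le j \le k+1} \sigma_j$ therefore gives $\sigma_i \le \sigma \le \sigma_{i'}$, which is exactly the claimed fact. (The two indices $i$ and $i'$ may coincide, in which case $\sigma_i = \sigma$ for every level and the statement holds trivially.)

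There is no genuine obstacle: the statement is purely an averaging observation, and the only mild care required is in noting that all $|l_i|$ are strictly positive so that the weighted-average identity makes sense and the minimum/maximum levels are well defined. I would present the proof as the single display above followed by selection of the extremal indices, with one sentence reminding the reader that $\sum_i |l_i| |V|^{-1} = 1$.
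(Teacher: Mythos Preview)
Your proposal is correct and matches the paper's approach exactly: the paper simply prefaces this Fact with ``Due to averaging, we observe the following'' and gives no further argument. You have merely spelled out that averaging argument in detail.
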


\begin{fact}\label{fct:l1}
\begin{eqnarray*}
|V| &= &\sum _{i =1}^{k+1} |l_{i}|
=
 \sum_{i =1}^{k+1} |l_{k+1}| \cdot \lt(\frac{4}{7}\rt)^i\\
&=&  |l_{k+1}| \cdot \lt(1 - \lt(\frac{4}{7}\rt)^{k+2}\rt) \cdot \frac73\\
&=&  \left(\frac47 \right)^k\cdot |l_{1}| \cdot \lt(1 - \lt(\frac{4}{7}\rt)^{k+2}\rt) \cdot \frac73
\end{eqnarray*}
so
$ \frac37 \leq \frac{|l_{k+1}|}{|V|} \leq \frac37 \cdot \frac{1}{1- \lt(\frac{4}{7}\rt)^{k+2}  }
$, and
$
\frac37 \cdot \lt(\frac{4}{7}\rt)^{k}
\leq \frac{|l_1|}{|V|} \leq \frac37 \cdot \lt(\frac{4}{7}\rt)^{k} \cdot \frac{1}{1- \lt(\frac{4}{7}\rt)^{k+2}  }.
$
\end{fact}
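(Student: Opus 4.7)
The plan is to verify Fact~\ref{fct:l1} by a direct geometric-series computation starting from the level sizes already identified in the construction of $Dec_kC$. Recall that level $l_i$ contains $|l_i| = 4^{k-i+1}\cdot 7^{i-1}$ vertices (this is implicit in the top-down construction: starting from $|l_1|=4^k$ input vertices and multiplying by $7/4$ with each doubling of the recursion depth, one reaches $|l_{k+1}|=7^k$ output vertices). The first equality $|V|=\sum_{i=1}^{k+1}|l_i|$ is immediate since the $k+1$ levels partition $V$.

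Next I would factor $|l_{k+1}|=7^k$ out of every summand. Using $|l_i|/|l_{k+1}| = 4^{k-i+1}/7^{k-i+1} = (4/7)^{k-i+1}$, the sum $\sum_{i=1}^{k+1}|l_i|$ becomes $|l_{k+1}|\cdot\sum_{j=0}^{k}(4/7)^{j}$ after the change of variable $j=k-i+1$, which matches (up to the index substitution already applied in the statement) the second line of the display. Applying the standard closed form for a finite geometric series with ratio $4/7<1$ and simplifying $(1-4/7)^{-1}=7/3$ then yields the third line of the display. For the fourth line, I simply observe that $|l_1|=4^k = (4/7)^k\cdot 7^k = (4/7)^k\cdot |l_{k+1}|$, so substituting $|l_{k+1}| = (7/4)^k \cdot |l_1|$ converts the formula into the stated $|l_1|$-version.

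To obtain the two-sided bounds on $|l_{k+1}|/|V|$, I would rearrange the closed form into $|l_{k+1}|/|V| = (3/7)\cdot(1-(4/7)^{k+2})^{-1}$ and use the trivial bracketing $0<1-(4/7)^{k+2}\le 1$, which gives the lower bound $3/7$ immediately and the upper bound by inversion. The corresponding bounds on $|l_1|/|V|$ follow by multiplying both sides of the $|l_{k+1}|/|V|$ bounds by $(4/7)^k$, using once more the identity $|l_1|=(4/7)^k\cdot|l_{k+1}|$.

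There is no real obstacle in this proof; the only thing to be careful about is keeping the indexing consistent between $|l_i|=4^{k-i+1}7^{i-1}$ and the substituted exponent in the summation (an off-by-one here is the only way to get the argument wrong). Once the substitution is handled cleanly, each of the four displayed equalities and each of the four inequalities reduces to a one-line manipulation of the geometric series, so the proof is essentially a short calculation that can be presented in a few lines.
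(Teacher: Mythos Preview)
Your proposal is correct and is exactly the computation the paper has in mind: the paper states this as a \emph{Fact} with no separate proof, so the intended argument is precisely the direct geometric-series evaluation you describe, starting from $|l_i|=4^{k-i+1}7^{i-1}$ and reading off the ratios. Your cautionary note about the indexing is well placed (the displayed exponents in the statement are slightly off after the reindexing), and once that is handled the inequalities follow immediately since the upper bound on $|l_{k+1}|/|V|$ is in fact the exact value and the lower bound comes from $1-(4/7)^{k+2}<1$.
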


\begin{claim}\label{clm:delta-cost}
There exists $c' = c'(G_1)$ so that $|E(S, V \setminus S) \cap E(l_i,l_{i+1})| \geq c'\cdot d \cdot|\delta_i| \cdot
|l_i| $.
\end{claim}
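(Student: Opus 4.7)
The statement is a local isoperimetric bound for the slab of $G_k = Dec_kC$ lying between two consecutive layers $l_i$ and $l_{i+1}$. My strategy is to exploit the recursive structure of $Dec_kC$ by decomposing this slab into many edge-disjoint copies of one fixed bipartite block $B^\ast$ that comes from $Dec_1C$, establish a finite-size isoperimetric bound inside $B^\ast$, and then sum.

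Concretely, the top-down construction of $Dec_kC$ (four copies of $Dec_{k-1}C$ glued beneath $7^{k-1}$ copies of $Dec_1C$, iterated) identifies the edges of $G_k$ between $l_i$ and $l_{i+1}$ as an edge-disjoint union of $N_i := |l_i|/4 = 4^{k-i}\,7^{i-1}$ isomorphic copies of a fixed bipartite block $B^\ast$, each with $4$ vertices on its $l_i$-side and $7$ on its $l_{i+1}$-side (the binary-tree expansions of the high-in-degree vertices of $Dec_1C$ from Comment~\ref{cmt:constant-in-degree} are absorbed into $B^\ast$, so that $B^\ast$ stays of constant size depending only on $G_1$). Moreover the copies are vertex-disjoint on their $l_i$- and $l_{i+1}$-sides. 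For each copy $j \in [N_i]$, let $p_j$ and $q_j$ denote the fractions of its $l_{i+1}$-side and $l_i$-side, respectively, that lie in $S$. Then, writing $\delta_i := \sigma_{i+1} - \sigma_i$, the disjoint partition gives
\begin{equation*}
\sum_{j=1}^{N_i} p_j \;=\; \sigma_{i+1}\,N_i, \qquad \sum_{j=1}^{N_i} q_j \;=\; \sigma_i\,N_i, \qquad \sum_{j=1}^{N_i}(p_j - q_j) \;=\; N_i\,\delta_i.
\end{equation*}

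Inside a single copy of $B^\ast$ I will show that for every choice of $S$,
\begin{equation*}
\#\{\text{edges of the copy crossing } S\} \;\geq\; c''\,|p_j - q_j|,
\end{equation*}
where $c'' = c''(G_1)$ is a positive constant. The argument is compactness: $B^\ast$ is connected and of constant size, so the only cuts with zero crossing edges are $S \cap B^\ast \in \{\emptyset,\, V(B^\ast)\}$, both of which satisfy $p_j = q_j$. Hence the ratio $\#\{\text{crossing}\}/|p_j - q_j|$ ranges over a finite set of positive values on non-trivial cuts and attains a positive minimum $c''$. Summing this over the $N_i$ copies and using the triangle inequality $\sum_j |p_j - q_j| \geq |\sum_j (p_j - q_j)| = N_i|\delta_i|$ yields
\begin{equation*}
|E(S,V\setminus S) \cap E(l_i,l_{i+1})| \;\geq\; c''\,N_i\,|\delta_i| \;=\; \tfrac{c''}{4}\,|l_i|\,|\delta_i|,
\end{equation*}
and absorbing the constant maximum degree $d$ (Fact~\ref{fct:constant-degree}) into $c' := c''/(4d)$ gives the claim.

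The main technical nuisance is inside the isoperimetric step in $B^\ast$: one must fix a consistent convention for the binary-tree internal vertices arising from the high-in-degree replacements of $Dec_1C$ (e.g., placing each such tree-internal vertex in a well-defined intermediate layer, or collapsing the entire binary tree into a single macro-edge for accounting purposes) so that both the decomposition above and the finite case analysis in $B^\ast$ go through cleanly. Since under either convention $B^\ast$ remains of bounded size depending only on $G_1$, the positive minimum $c''$ is well-defined, and the rest is bookkeeping.
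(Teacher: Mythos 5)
Your proof is correct and follows essentially the same route as the paper: decompose the edges between $l_i$ and $l_{i+1}$ into $|l_i|/4$ vertex-disjoint copies of the connected block $Dec_1C$, use connectivity to get a per-copy lower bound on crossing edges, and sum. The paper's version is slightly more elementary — it directly bounds from below the number of copies that are ``mixed'' (intersecting both $S$ and $V\setminus S$) by $|\delta_i|\,|l_i|/4$, each contributing $\geq 1$ crossing edge, rather than invoking a compactness constant $c''$ and the triangle inequality — but this is a cosmetic difference, and your handling of the binary-tree expansion of Comment~\ref{cmt:constant-in-degree} correctly reflects the paper's own reasoning.
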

\begin{proof}
Let $G'$ be a $G_1$ component connecting $l_i$ with $l_{i+1}$ (so it has four vertices in $l_i$ and seven in $l_{i+1}$).
$G'$ has no edges in $E(S,V \setminus
S)$ if all or none of its vertices are in $S$. Otherwise, as $G'$ is connected, it
contributes at least one edge to $E(S,V \setminus S)$.
The number of such $G_1$ components with all their vertices in $S$
is at most $\min\{\sigma_i,\sigma_{i+1}\}\cdot \frac{|l_i|}{4}$. Therefore, there are at least
$|\sigma_{i}-\sigma_{i+1}|\cdot \frac{|l_i|}{4}$  $G_1$ components with at least one vertex in $S$ and one vertex that is not.
\end{proof}

\begin{claim}[Homogeneity between levels]\label{clm:S-homogeneity}
If there exists $i$ so that $\frac{|\sigma-\sigma_i|}{\sigma}\geq \frac1{10}$,  then
$$|E(S,V \setminus S)| \geq c \cdot d \cdot |S|\cdot \lt( \frac47\rt)^k $$
where $c>0$ is some constant depending on $G_1$ only.
\end{claim}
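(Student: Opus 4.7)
The plan is to derive the claim by combining Claim \ref{clm:delta-cost} (edges between consecutive levels scale with $|\delta_i|\cdot|l_i|$) with the telescoping identity for the $\sigma_j$'s and the monotonicity of the layer sizes $|l_j|$. The intuition is that if some level $i_0$ has density far from the global density $\sigma$, the averaging property forces another level $i'$ on the opposite side of $\sigma$, and any path of density values from $\sigma_{i_0}$ to $\sigma_{i'}$ must accumulate total variation at least $\sigma/10$; this total variation is exactly what pays for boundary edges.

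In more detail, first reduce by symmetry: WLOG $\sigma_{i_0}\le\sigma-\sigma/10$ (the case $\sigma_{i_0}\ge\sigma+\sigma/10$ is analogous after swapping roles). Since $\sigma=\sum_j \sigma_j|l_j|/|V|$ is the weighted average of the $\sigma_j$'s, there must exist some index $i'$ with $\sigma_{i'}\ge\sigma$. Assume WLOG $i_0<i'$ (otherwise relabel the telescoping). Then
\[
\sigma_{i'}-\sigma_{i_0}=\sum_{j=i_0}^{i'-1}(\sigma_{j+1}-\sigma_j)=-\sum_{j=i_0}^{i'-1}\delta_j\ \ge\ \frac{\sigma}{10},
\]
so $\sum_{j=i_0}^{i'-1}|\delta_j|\ge\sigma/10$.

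Next, I would apply Claim \ref{clm:delta-cost} to every $j$ in this interval to get
\[
|E(S,V\setminus S)|\ \ge\ \sum_{j=i_0}^{i'-1} c'\cdot d\cdot |\delta_j|\cdot |l_j|.
\]
Using the fact that $|l_j|=4^{k-j+1}7^{j-1}$ is strictly increasing in $j$, so $|l_j|\ge|l_1|$ for all $j$, I pull $|l_1|$ out of the sum to obtain $|E(S,V\setminus S)|\ge c'd|l_1|\cdot\sigma/10$. Then I invoke Fact \ref{fct:l1} which gives $|l_1|\ge\tfrac{3}{7}(4/7)^k|V|$, and substitute $|S|=\sigma|V|$, producing
\[
|E(S,V\setminus S)|\ \ge\ \frac{3c'}{70}\cdot d\cdot |S|\cdot\left(\frac{4}{7}\right)^k,
\]
so $c=3c'/70$ (which depends only on $G_1$ through $c'$) works.

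The proof is essentially a routine telescoping plus averaging argument, so I do not expect a serious technical obstacle; the only care needed is bookkeeping when $i'<i_0$ (the telescoping direction flips but $|\delta_j|$ is symmetric, so the same bound results) and verifying that in the symmetric case $\sigma_{i_0}\ge\sigma+\sigma/10$ the averaging property still yields a witness $i'$ with $\sigma_{i'}\le\sigma$. The conceptual point — that the geometric growth of layers makes even the smallest layer $l_1$ carry a factor of $(4/7)^k|V|$, matching the $(4/7)^k$ in the target bound — is what makes the argument tight for the subsequent application in Lemma \ref{lem:Dec-expansion}.
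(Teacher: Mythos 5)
Your proof is correct and follows essentially the same route as the paper: both invoke Claim \ref{clm:delta-cost} level by level, pull out the factor $|l_1|$ using the monotonicity $|l_j|\ge|l_1|$, lower-bound the resulting total variation $\sum_j|\delta_j|$ by $\sigma/10$ via averaging, and finish with Fact \ref{fct:l1}. The only cosmetic difference is that the paper sums $|\delta_j|$ over all $j\in[k]$ and bounds by $\max_j\sigma_j-\min_j\sigma_j$, whereas you telescope over an explicit subinterval $[i_0,i')$ between a bad level and an averaging witness; these are interchangeable and yield the same constant $c=3c'/70$. (Minor typo: with $\delta_j=\sigma_{j+1}-\sigma_j$ the telescoped sum is $+\sum\delta_j$, not $-\sum\delta_j$, but this has no effect once you pass to $\sum|\delta_j|$.)
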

\begin{proof}
Assume that there exists $j$ so that $\frac{|\sigma-\sigma_j|}{\sigma}\geq \frac1{10}$.
Let $\delta_i \equiv \sigma_{i+1}-\sigma_i$.
By Claim \ref{clm:delta-cost}, we have
\begin{eqnarray*}
|E(S,V \setminus S)|
&\geq& \sum _{i \in [k]} |E(S, V \setminus S) \cap E(l_i,l_{i+1})|\\
&\geq& \sum _{i \in [k]} c'\cdot d\cdot|\delta_i| \cdot |l_i|\\
&\geq& c'\cdot d\cdot |l_1| \sum _{i \in [k]}   |\delta_i|\\
&\geq& c'\cdot d\cdot |l_1|
 \cdot \lt(\max_{i \in [k+1]} \sigma_i
- \min_{i \in [k+1]} \sigma_i\rt) .
\end{eqnarray*}
By the initial assumption, there exists $j$ so that $\frac{|\sigma-\sigma_j|}{\sigma}\geq \frac1{10}$, therefore $\max_i \sigma_i - \min_i \sigma_i \geq \frac{\sigma}{10}$, then
\begin{align*}
|E(S,V \setminus S)|
&\geq  c'\cdot d\cdot |l_1| \cdot \frac{\sigma}{10}\\
\intertext{By Fact \ref{fct:l1}, $|l_1| \geq  \frac37 \cdot \lt ( \frac{4}{7} \rt)^k \cdot |V| $, }
&\geq c'\cdot d \cdot \frac37 \cdot \lt( \frac{4}{7} \rt)^k \cdot |V| \cdot \frac{\sigma}{10}\\
\intertext{As $|S| = \sigma \cdot |V|$,}
&\geq c \cdot d\cdot |S| \cdot \lt( \frac47\rt)^k
\end{align*}
for any $c \leq \frac{c'}{10}\cdot \frac37$.
\end{proof}

Let $T_k$ be a tree corresponding to the recursive construction of $G_k$ in the following way (see Figure \ref{fig:tree2}):
$T_k$ is a tree of height $k+1$, where each internal node has four children.
The root $r$ of $T_k$ corresponds to $l_{k+1}$ (the largest level of $G_k$).
The four children of $r$ correspond to the largest levels of the four graphs that one can obtain by
removing the level of vertices $l_{k+1}$ from $G_k$. And so on.
For every node $u$ of $T_k$, denote by $V_u$ the set of vertices in $G_k$ corresponding to $u$.
We thus have $|V_r|=7^k$ where $r$ is the root of $T_k$,
$|V_u| = 7^{k-1}$ for each node $u$ that is a child of $r$;
and in general we have $4^{i}$ tree nodes $u$ corresponding to a set of size $|V_u| = 7^{k-i+1}$.
Each leaf $l$ corresponds to a set of size $1$.

\begin{figure}[h]
\begin{center}
\psfig{file=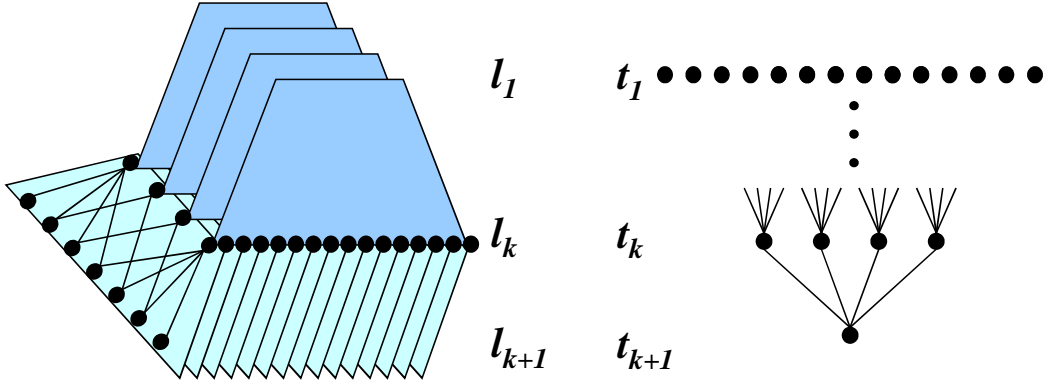 ,clip=, scale=0.8}
\protect\caption{The graph $G_k$ and its corresponding tree $T_k$. }
\label{fig:tree2}
\end{center}
\end{figure}

For a tree node $u$, let us define $\rho_u = \frac{|S \cap V_u|}{|V_u|}$ to be the fraction of $S$ nodes in $V_u$,
and $\delta_u = |\rho_u - \rho_{p(u)}|$, where $p(u)$ is the parent of $u$ (for the root $r$ we let $p(r)=r$).
We let $t_i$ be the $i$th level of $T_k$, counting from the bottom, so $t_{k+1}$ is the root and $t_{1}$ are the leaves.

\begin{fact}\label{fct:heterogeneity}
As $V_r=l_{k+1}$ we have $\rho_r = \sigma_{k+1}$. For a tree leaf $u \in t_1$, we have $|V_u|=1$. Therefore $\rho_u \in \{0,1\}$.
The number of vertices $u$ in $t_1$ with $\rho_u=1$ is $\sigma_1 \cdot |l_1|$.
\end{fact}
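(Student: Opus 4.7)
The plan is to verify each of the three assertions of the Fact by carefully unwinding the recursive definition of the tree $T_k$ and the normalized densities $\rho_u$, $\sigma_i$. None of the three requires a genuine argument: each is essentially a matter of reading off from the construction. The only care needed is to confirm that the bottom tree level $t_1$ is in natural bijection with $l_1$, which is where one could mis-index.

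First I would dispatch $\rho_r = \sigma_{k+1}$. By the top-down construction of $T_k$, the root $r$ is defined to correspond to the largest level of $G_k$, so $V_r = l_{k+1}$. Substituting into the definitions,
\begin{equation*}
\rho_r \;=\; \frac{|S \cap V_r|}{|V_r|} \;=\; \frac{|S \cap l_{k+1}|}{|l_{k+1}|} \;=\; \frac{|S_{k+1}|}{|l_{k+1}|} \;=\; \sigma_{k+1}.
\end{equation*}

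Next, for a leaf $u \in t_1$, I would observe that at each step of the tree-construction recursion we remove the top level of the current subgraph, which breaks that subgraph into four subgraphs (the branching factor $4$ matches the $4$ children of each $Dec_1C$-component in the last layer). Iterating $k$ times reduces $G_k$ all the way down to its bottom level $l_1$, where each of the $4^k = |l_1|$ remaining vertices sits at a distinct leaf of $T_k$. Thus $|V_u| = 1$ for $u \in t_1$, which forces $\rho_u = |S \cap V_u| \in \{0,1\}$, and the bijection between $t_1$ and $l_1$ identifies $\{u \in t_1 : \rho_u = 1\}$ with $S \cap l_1 = S_1$. By definition $|S_1| = \sigma_1 \cdot |l_1|$, giving the third assertion.

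The only step where one could slip is confirming the bijection between leaves of $T_k$ and vertices of $l_1$. This is really just a bookkeeping check: the tree has height $k+1$ and branching factor $4$, so $|t_1| = 4^k$, which coincides with $|l_1| = 4^k$ recorded earlier in the section; the recursion makes the correspondence canonical.
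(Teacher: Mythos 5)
Your proof is correct and takes the same route the paper (implicitly) does: the paper states this as a Fact with no proof because it is exactly the kind of definitional unwinding you carry out — $\rho_r = \sigma_{k+1}$ by substitution, $|V_u|=1$ for leaves since the branching factor $4$ over $k$ levels yields $4^k = |l_1|$ singleton leaves in bijection with $l_1$, and the count $\sigma_1|l_1|$ is just $|S_1|$ under that bijection. The only cosmetic quibble is your phrase ``remove the top level'': in the paper's indexing $l_{k+1}$ is the largest level (the $7^k$ products feeding the decoder), and whether that sits at the ``top'' or ``bottom'' of the drawing is a matter of convention, but the substance is unaffected.
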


\begin{claim}\label{clm:delta-tree-cost}
Let $u_0$ be an internal tree node, and let $u_1,u_2,u_3,u_4$ be its four children. Then
$$\sum_i|E(S, V \setminus S) \cap E(V_{u_i},V_{u_0})| \geq c''\cdot d \cdot \sum_i  |\rho_{u_i}-\rho_{u_0}| \cdot |V_{u_i}|$$
where $c''=c''(G_1)$.
\end{claim}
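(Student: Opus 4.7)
The plan is to localize the cut-edge count to the individual $Dec_1C$ ``connector'' subgraphs of $G_k$ that sit between $V_{u_0}$ and the four children's sets $V_{u_1},\ldots,V_{u_4}$, establish a per-connector inequality, and then sum using the triangle inequality. Throughout I work in the high in-degree representation of $Dec_1C$, since by Comment~\ref{cmt:constant-in-degree} passing to the constant-degree representation only changes the constant.

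By the recursive construction of $G_k$ there are exactly $|V_{u_0}|/7=|V_{u_i}|$ edge-disjoint copies of $G_1=Dec_1C$ lying between $V_{u_0}$ and $\bigcup_i V_{u_i}$; each such connector $D$ has its $7$ inputs inside $V_{u_0}$ and its $4$ outputs distributed one into each $V_{u_i}$. For each $D$ define $p_D=|S\cap V_{u_0}(D)|/7$ and $y_D^i=\mathbf{1}[V_{u_i}(D)\in S]$, and let $\mathrm{cut}_D^{(i)}$ be the number of cut edges between $D$'s $7$ inputs in $V_{u_0}$ and its single output in $V_{u_i}$. A routine count gives $\sum_D p_D=\rho_{u_0}|V_{u_i}|$, $\sum_D y_D^i=\rho_{u_i}|V_{u_i}|$, and $\sum_i|E(S,V\setminus S)\cap E(V_{u_i},V_{u_0})|=\sum_D\sum_i\mathrm{cut}_D^{(i)}$.

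The key step is a local inequality: there exists $c_1=c_1(G_1)>0$ such that for every connector $D$,
\[
\sum_{i=1}^{4}\mathrm{cut}_D^{(i)}\;\geq\;c_1\sum_{i=1}^{4}|y_D^i-p_D|.
\]
This is proved by finiteness on the $2^{11}$ configurations of $D$'s input/output memberships. The right-hand side vanishes iff $p_D\in\{0,1\}$ and $y_D^i=p_D$ for all $i$. The left-hand side vanishes iff, for every output $y_D^i$, its in-neighborhood $I_i^D$ is entirely aligned with $y_D^i$ in $S$-membership; in Strassen's $Dec_1C$ the sets $I_1,\ldots,I_4$ jointly cover all seven inputs and overlap connectedly (e.g.\ $I_1\cap I_2=\{M_5\}$ and $I_1\cap I_4=\{M_1\}$), so propagation along these overlaps forces all $11$ vertices of $D$ to share their $S$-membership. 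The two vanishing sets coincide, so on the finite domain where both sides are positive the ratio is bounded below by a positive $c_1$.

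Summing the local inequality over all $D$ and applying the triangle inequality then gives
\[
\sum_i|E(S,V\setminus S)\cap E(V_{u_i},V_{u_0})|\;\geq\;c_1\sum_D\sum_i|y_D^i-p_D|\;\geq\;c_1\sum_i\Bigl|\sum_D(y_D^i-p_D)\Bigr|\;=\;c_1\sum_i|\rho_{u_i}-\rho_{u_0}|\cdot|V_{u_i}|,
\]
and setting $c''=c_1/d$ yields the claim. The main obstacle is the local inequality, specifically the assertion that within a single $Dec_1C$, having no cut edges forces all eleven vertices to agree in $S$-membership. This is a short combinatorial check for Strassen's decoding graph; an analogous verification would be required whenever the claim is applied to a different ``Strassen-like'' base algorithm.
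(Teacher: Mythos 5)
Your argument is correct and rests on the same structural decomposition as the paper's proof: both localize the cut-edge count to the $|V_{u_1}|$ edge-disjoint copies of $G_1=Dec_1C$ sitting between $V_{u_0}$ and $\bigcup_i V_{u_i}$, and both ultimately exploit connectedness of $Dec_1C$. Where you diverge is in the per-copy bookkeeping and the aggregation. The paper's own argument is cruder and a bit quicker: it uses only the observation that any ``mixed'' copy (one that is neither entirely inside nor entirely outside $S$) contributes at least one cut edge, and then lower-bounds the number of mixed copies by pigeonhole — since the number of all-$S$ copies is at most $\min_j\rho_{u_j}\cdot|V_{u_1}|$ and the number of all-non-$S$ copies is at most $\min_j(1-\rho_{u_j})\cdot|V_{u_1}|$, there are at least $|\rho_{u_0}-\rho_{u_i}|\cdot|V_{u_1}|$ mixed copies for every $i$, and hence at least $\frac14\sum_i|\rho_{u_i}-\rho_{u_0}|\cdot|V_{u_i}|$. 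You instead prove the finer per-copy inequality $\sum_i\mathrm{cut}_D^{(i)}\geq c_1\sum_i|y_D^i-p_D|$ via a finiteness argument over the $2^{11}$ membership configurations, then aggregate with the triangle inequality $\sum_D|y_D^i-p_D|\geq\bigl|\sum_D(y_D^i-p_D)\bigr|$. Both routes are sound and yield the same bound up to the value of $c''$; your version makes the role of the internal connectivity pattern of $Dec_1C$ (the overlapping in-neighborhoods $I_1,\dots,I_4$) fully explicit, whereas the paper's relies only on the single qualitative fact that $Dec_1C$ is connected. Your concluding remark — that the check ``no cut edges forces all eleven vertices to agree'' must be reverified for other base cases — is precisely the critical technical assumption the paper isolates in Section~\ref{sec:technical-assumption}: that $Dec_1C$ be a connected graph.
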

\begin{proof}
The proof follows that of Claim \ref{clm:delta-cost}.
Let $G'$ be a $G_1$ component connecting $V_{u_0}$ with $\bigcup_{i \in [4]}V_{u_i}$ (so it has seven vertices in $V_{u_0}$
and one in each of $V_{u_1}$,$V_{u_2}$,$V_{u_3}$,$V_{u_4}$).
$G'$ has no edges in $E(S,V \setminus
S)$ if all or none of its vertices are in $S$. Otherwise, as $G'$ is connected, it
contributes at least one edge to $E(S,V \setminus S)$.
The number of $G_1$ components with all their vertices in $S$
is at most $\min\{\rho_{u_0},\rho_{u_1},\rho_{u_2},\rho_{u_3},\rho_{u_4}\}\cdot \frac{|V_{u_1}|}{4}$. Therefore, there are at least
$\max_{i \in [4]}\{|\rho_{u_0}-\rho_{u_i}|\}\cdot \frac{|V_{u_1}|}{4} \geq \frac{1}{16} \cdot \sum_{i\in [4]} |\rho_{u_i}-\rho_{u_0}| \cdot |V_{u_i}|$ \; $G_1$  components with at least one vertex in $S$ and one vertex that is not.
\end{proof}

\begin{align*}
\intertext{We have}
\nonumber
|E(S,V \setminus S)|
&= \sum _{u \in T_k} |E(S, V \setminus S) \cap E(V_u,V_{p(u)})|\\
\nonumber
\intertext{By Claim \ref{clm:delta-tree-cost}, this is}
\nonumber
&\geq  \sum _{u \in T_k} c''\cdot d\cdot  |\rho_{u}-\rho_{p(u)}|  \cdot |V_u|\\
&= c''\cdot d\cdot \sum_{i \in [k]} \sum _{u \in t_i} |\rho_{u}-\rho_{p(u)}|  \cdot 7^{i-1}\\
\nonumber
&\geq c''\cdot d\cdot \sum_{i \in [k]} \sum _{u \in t_i}  |\rho_u-\rho_{p(u)}| \cdot 4^{i-1} \\
\nonumber
\intertext{As each internal node has four children, this is}
&= c''\cdot d\cdot \sum _{v \in t_1} \sum_{u\in v \sim r}  |\rho_u-\rho_{p(u)}| \\
\nonumber
\intertext{where $v \sim r$ is the path from $v$ to the root $r$. By the triangle inequality for the function~$|~\cdot~|$}
&\geq c''\cdot d\cdot \sum _{v \in t_1}  |\rho_u-\rho_r|\\
\nonumber
\intertext{By Fact \ref{fct:heterogeneity}, }
&\geq  c''\cdot d\cdot  |l_1| \cdot ((1-\sigma_1) \cdot \rho_{r} + \sigma_1 \cdot (1-\rho_{r}) )\\
\intertext{By Claim \ref{clm:S-homogeneity}, w.l.o.g.,
$|\sigma_{k+1} - \sigma|/\sigma \leq \frac{1}{10}$ and
$|\sigma_1 - \sigma|/\sigma \leq \frac{1}{10}$. As $\rho_r = \sigma_{k+1}$,}
& \geq \frac34 \cdot c''\cdot d\cdot  |l_1| \cdot \sigma  \\
\intertext{and by Fact \ref{fct:l1},}
&\geq c\cdot d\cdot  |S| \cdot \left(\frac47 \right)^{k}\\
\intertext{for any $c \leq \frac34 \cdot c''$.}
\end{align*}
This completes the proof of Lemma \ref{lem:Dec-expansion}.
\end{proof}

\section{Other Algorithms}\label{sec:other-mm}
We now discuss the applicability of our approach to other algorithms, starting with other fast matrix multiplication algorithms.
\subsection{``Strassen-like'' Algorithms}\label{sec:strassen-like-algs}
A ``Strassen-like'' algorithm has
a recursive structure that utilizes a base case: multiplying two
$n_0$-by-$n_0$ matrices using $m(n_0)$ multiplications.
Given two matrices of size $n$-by-$n$, it splits them into $n_0^2$
blocks (each of size $\frac{n}{n_0}$-by-$\frac{n}{n_0}$), and works blockwise,
according to the base case algorithm. Additions (and subtractions) in the base case are interpreted as
additions (and subtractions) of blocks. These are performed element-wise. Multiplications in the base case are interpreted as
multiplications of blocks. These are performed by recursively calling the algorithm.
The
arithmetic count of the algorithm is then
$T(n) = m(n_0) \cdot T\lt( \frac{n}{n_0} \rt) + O(n^2)$, so $T(n)
= \Theta(n^{\omega_0})$ where $\omega_0 = \log_{n_0}m(n_0)$.

This is the structure of all the fast matrix
multiplication algorithms that were obtained since Strassen's
\cite{Pan80,Bini80,Schonhage81,Romani82,CoppersmithWinograd82,Strassen87,CoppersmithWinograd87},
(see \cite{BurgisserClausenShokrollahi97} for discussion of these algorithms), as well as \cite{CohnKleinbergSzegedyUmans05}, where the base case utilizes a novel group-theoretic approach.
In fact, any fast matrix multiplication algorithm can be
converted into this form \cite{Raz03},
and can even be made numerically stable while preserving this form
[Demmel, Dumitriu, Holtz, and Kleinberg, 2007].

\subsubsection{A critical technical assumption}\label{sec:technical-assumption}
For our technique to work, we further demand that the $Dec_1C$ part of the computation graph is a connected graph, in order to be  ``Strassen-like'' (this was assumed in the proof of Claim \ref{clm:delta-cost}).
Thus the ``Strassen-like'' class includes Winograd's variant of Strassen's algorithm \cite{Winograd71},
which uses 15 additions
rather than 18, but not the cubic algorithm, where
$Dec_1C$ is composed of four disconnected graphs (corresponding to the four outputs).
We conjecture that $Dec_1C$ is indeed connected
for all existing fast matrix-multiplication algorithms.
We note that the demand of connectivity of $Dec_1C$ may be waved in some cases (see \cite{BallardDemmelHoltzSchwartz11mix}).

\subsubsection{The communication costs of ``Strassen-like'' algorithms}

To prove Theorem \ref{thm:general}, which generalizes the \io lower bound of Strassen's algorithm (Theorem \ref{thm:main}) to all ``Strassen-like'' algorithms, we note the following:
The entire proof of Theorem \ref{thm:main}, and in particular,
the computations in the proof of Lemma \ref{lem:Dec-expansion},
hold for any ``Strassen-like'' algorithm, where we plug in
$n_0^2,m(n_0)$, and $\frac{n_0}{m(n_0)}$ instead of $4,7$, and
$\frac47$. For bounding the asymptotic \io, we do not care about
the number of internal vertices of $Dec_1C$; we need only to know
that $Dec_1C$ is connected (this critical technical assumption is used in the proof of Claim \ref{clm:delta-cost}), and to know the sizes $n_0$ and $m(n_0)$.
The only nontrivial adjustment is to show the equivalent of Fact \ref{fct:constant-degree}: that the graph $Dec_{\log
n}C$ is of bounded degree.

\begin{claim}\label{clm:constant-degree-general}
The $Dec_{\log n}C$ graph of any ``Strassen-like'' algorithm is of
degree bounded by a constant.
\end{claim}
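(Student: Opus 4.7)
The plan is to decompose $Dec_{\log n} C$ into edge-disjoint copies of the single fixed graph $Dec_1 C$ (this is exactly the decomposition underlying Claim \ref{clm:small-sets} and the recursive construction in Section \ref{sec:strassen}, with $4$ and $7$ replaced by $n_0^2$ and $m(n_0)$) and then to bound the degree of every vertex by inspecting only the copies that actually contain it.

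First I would observe that $Dec_1 C$ is a finite graph whose size, and hence whose maximum degree, depends only on the base-case parameters $n_0$ and $m(n_0)$; call this maximum degree $d^\star = d^\star(n_0, m(n_0))$. Then, by an easy induction on the recursion depth, every edge of $Dec_{\log n} C$ belongs to \emph{exactly} one copy of $Dec_1 C$ in the construction: the top-down step glues $m(n_0)^{i}$ copies of $Dec_1 C$ onto $n_0^2$ copies of $Dec_i C$ by identifying vertices only, never merging or duplicating edges, so the edge sets of the copies remain disjoint at every recursion level.

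The heart of the argument is to show that each vertex $v$ of $Dec_{\log n}C$ lies in \textbf{at most two} copies of $Dec_1 C$. By the layered nature of the recursion, identifications only happen at the interface between the $n_0^2$-output side of a ``higher'' $Dec_1 C$ copy and the $m(n_0)$-input side of a ``lower'' copy one recursion level down, and each such output is identified with exactly one such input. Hence $v$ falls into one of four cases: (i) it is strictly internal to a single $Dec_1 C$ copy; (ii) it is a top-level input of $Dec_{\log n}C$, sitting in one top-most copy; (iii) it is a bottom-level output of $Dec_{\log n}C$, sitting in one bottom-most copy; or (iv) it is a junction vertex, appearing as an output of exactly one $Dec_1 C$ copy and as an input of exactly one $Dec_1 C$ copy at the next level down. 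In every case the degree of $v$ in $Dec_{\log n}C$ is the sum of its degrees inside the at most two copies containing it, and is therefore at most $2d^\star$, a constant depending only on the base case.

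I do not expect any real obstacle; the only mild care needed is the technicality already flagged in Comment \ref{cmt:constant-in-degree}, namely that $Dec_1 C$ may have to be drawn with some vertices of in-degree greater than two which must be expanded into constant-sized binary trees to obtain a genuine binary-operation CDAG. Since this expansion is entirely local to a single $Dec_1 C$ copy and only creates vertices of in-degree $2$ and out-degree $1$, the ``at most two copies per vertex'' count survives, the constant $d^\star$ is simply replaced by a slightly larger constant, and the bound $O(d^\star)$ is preserved.
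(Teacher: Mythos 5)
Your proof reduces correctly to the observation that if each vertex of $Dec_{\log n}C$ lies in at most two copies of $Dec_1C$, then the degree is bounded by $2d^\star$. But the ``at most two copies'' step is exactly where the content of this claim lives, and your case analysis silently assumes the very thing that must be proved: that the input vertex set and output vertex set of $Dec_1C$ are disjoint. If some vertex $w$ of $Dec_1C$ were simultaneously an input vertex $M_k$ and an output vertex $C_{ij}$, then in the glued graph a vertex could be the output of a copy at one level, get identified with an input of a copy one level down, \emph{also} be an output of that lower copy (since it is the image of $w$), get identified with an input one further level down, and so on. Such a vertex would lie in $\Theta(\log n)$ copies, not two, and its degree would grow unboundedly. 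This is not a hypothetical: the paper points out that $Enc_1A$ and $Enc_1B$ do have input-and-output vertices (e.g.\ $A_{11}$), and consequently $Enc_{\lg n}A$ and $Enc_{\lg n}B$ have vertices of out-degree $\Theta(\lg n)$. Your four-case classification omits this scenario, so the ``at most two copies'' bound does not follow from the reasoning you give.

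The paper's proof supplies precisely the missing ingredient: it argues that for \emph{any} Strassen-like algorithm, $Dec_1C$ cannot have a vertex that is both input and output. The argument is algebraic, not combinatorial: an output vertex of $Dec_1C$ represents an entry $C_{ij}$, which is the inner product of a row of $A$ and a column of $B$ --- an \emph{irreducible} bilinear form in the entries of $A$ and $B$; whereas an input vertex of $Dec_1C$ represents one of the multiplications $M_k$, which by construction is a product of a linear combination of entries of $A$ and a linear combination of entries of $B$, hence a \emph{reducible} bilinear form. An irreducible polynomial cannot equal a reducible one, so the two vertex sets are disjoint, and only then does your ``at most two copies'' argument (and the resulting degree bound $2d^\star$) go through. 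To complete your proof, you would need to add this disjointness lemma; without it, the statement you have actually established is the conditional ``if the inputs and outputs of $Dec_1C$ are disjoint then $Dec_{\log n}C$ has bounded degree,'' which is only the easy half of the claim.
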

\begin{proof}
If the set of input vertices of $Dec_1C$, and the set of its
output vertices are disjoint, then $Dec_{\log n}C$ is of constant
bounded degree (its maximal degree is at most twice the largest
degree of $Dec_1C$).

 Assume (towards contradiction) that the base graph $Dec_1C$ has
an input vertex which is also an output vertex. An output vertex
represents the inner product of two $n_0$-long vectors, i.e., the corresponding row-vector of $A$ and column vector of $B$. The corresponding bilinear polynomial is irreducible.
This is a
contradiction, since
an input
vertex represents the multiplication of a (weighted) sum of elements of
$A$ with a (weighted) sum of elements of $B$.
\end{proof}

\subsection{Uniform, Non-stationary Fast Matrix Multiplication Algorithms}

Another class of matrix multiplication algorithms, the {\em uniform, non-stationary} algorithms,
allows mixing of schemes of the previous (``Strassen-like'') class.
In each recursive level, a different scheme may be used. The
CDAG has a repeating structure inside one level, but the structure
may differ between two distinct levels. This class
includes, for example, algorithms that optimize for input sizes
(for sizes that are not an integer power of a constant integer).
The class also includes algorithms
that cut the recursion off at some point, and then switch to the classical
algorithm. For these and other implementation issues, see
\cite{DouglasHerouxSlishmanSmith94,Huss-LedermanJacobsonJohnsonTsaoTurnbull96}
(sequential model) and \cite{DesprezSuter04} (parallel model).
The \io lower bound generalizes to this class, and will appear in a separate note \cite{BallardDemmelHoltzSchwartz11mix}.

\subsection{Non-uniform, Non-stationary Fast Matrix Multiplication Algorithms}

A third class, the {\em non-uniform, non-stationary}
algorithms, allows recursive calls to have different structure, even
when they refer to multiplication of matrices in the same recursive level.
It is not clear
how to analyze the expansion of the CDAG of an algorithm in the third class, although we are not aware of any algorithms in this class.
Such an analysis, applied to the base case of \cite{CohnKleinbergSzegedyUmans05},
may improve the \io lower bound for fast matrix multiplication by a (large) constant.

\subsection{Multiplying Rectangular Matrices}

Multiplication of rectangular matrices have seen a series of increasingly fast algorithms
culminating in Coppersmith's algorithm \cite{Coppersmith97}.
It is possible to extend our approach and obtain the first lower bounds on the communication costs for these algorithms, and show that in some cases they are attainable, and therefore optimal \cite{BallardDemmelHoltzSchwartz11rect}.

\subsection{Other Algorithms}
Fast matrix multiplication algorithms are basic building blocks in many fast algorithms in
linear algebra, such as algorithms for LU, QR, and solving the Sylvester
equation [Demmel, Dumitriu, and Holtz, 2007]. Therefore, \io lower bounds  for these algorithms
can be derived from our lower bounds for fast matrix multiplication algorithms
\cite{BallardDemmelHoltzSchwartz11lin}.
For example, a lower bound on LU (or QR, etc.) follows when the
fast matrix multiplication algorithm is called by the LU algorithm on
sufficiently large subblocks of the matrix.
This is the case in the algorithms of [Demmel, Dumitriu, and Holtz, 2007],
 and we can then deduce matching lower and upper bounds \cite{BallardDemmelHoltzSchwartz11lin}.

\section{Conclusions and Open Problems}\label{sec:open}
We obtained a tight lower bound for the \io of Strassen's and
``Strassen-like'' fast matrix multiplication algorithms. These bounds are
optimal for the sequential model with two memory levels and with
memory hierarchy. The lower bounds extend to the parallel model and other models.
Recently these bounds were attained (up to an $O(\log p)$ factor) by new parallel implementations,
for Strassen's algorithm and for ``Strassen-like'' algorithms \cite{BallardDemmelHoltzRomSchwartz11}.

\subsection{Memory constraints for the classical and ``Strassen-like'' matrix multiplication algorithms}\label{sec:recursive-implementations}
Some (parallel) algorithms require very little,
up to a constant factor extra memory beyond what is necessary to keep
the input and output. These are sometimes called {\em linear space algorithms}.
One class of such algorithms are the ``2D'' algorithms for classical matrix multiplication,
that use two-dimensional grid of processors. Here we allow
$M = \Theta \lt( \frac{n^2}{p} \rt)$ local memory use (recall that $p$ is the number of processors,
and $n$ the dimension of the matrices),
thus no replication of the input matrices is allowed
\cite{Cannon69}.

If the underlying grid of $p$ processors is a three-dimensional mesh,
and the available memory per processor
is larger by a factor of $p^{\frac13}$ than the minimum necessary to store the input and output matrices,
then a ``3D'' algorithm can be used (see
\cite{DekelNassimiSahni81,AggarwalChandraSnir90,AgarwalBalleGustavsonJoshiPalkar95,McCollTiskin99}).
These ``3D'' algorithms can reduce the communication cost by a factor of $p^{1/6}$,
down to $\Theta \lt( \frac{n^2}{p^{\frac23}} \rt)$ \cite{AggarwalChandraSnir90}, attaining the lower bounds
\cite{IronyToledoTiskin04,BallardDemmelHoltzSchwartz11a} that take into account any
amount of replication.

Recently, Demmel and Solomonik \cite{SolomonikDemmel11} showed how to combine these two extremes
into one algorithm (named ``2.5D'') and obtained a communication efficient implementation for classical matrix multiplication, for
local memory size $M = \Theta \lt( c \cdot \frac{n^2}{p} \rt)$ for any $1 \leq c \leq p^{\frac13}$.
See Table~\ref{tbl:lower-upper}.

Using Corollaries \ref{cor:parallel} and \ref{cor:parallel-str-like}, and plugging in $M = \Theta \lt( \frac{n^2}{p} \rt)$, $M = \Theta \lt( \frac{n^2}{p^{\frac23}} \rt)$ , and $M = \Theta \lt( c \cdot \frac{n^2}{p} \rt)$ we obtain
corresponding lower bounds for ``Strassen-like'' algorithm with various restriction local memory sizes.
These were recently attained by a new parallel implementation for Strassen and ``Strassen-like'' algorithms \cite{BallardDemmelHoltzRomSchwartz11} (see Table~\ref{tbl:lower-upper}).
Interestingly, the numerators here do not depend on $\omega_0$. Thus, an improvement of
$\omega_0$ (the exponent of the arithmetic cost of the algorithm) affects only the power of $p$ in the denominator.

\begin{table}[!h]\label{tbl:lower-upper}
\centering
\small
\begin{tabular}{|l||c|c||c|l|} \hline
 & \multicolumn{2}{c||}{Classical algorithms} & \multicolumn{2}{c|}{``Strassen-like'' algorithms} \\ \cline{2-5}
 & \multicolumn{2}{c||}{ } & \multicolumn{2}{c|}{$2 < \omega_0 < 3$} \\
 \cline{2-5}
 & Lower  bound  & Attained by & Lower  bound  & Attained by \\
 \hline
\hline
Parallel &\cite{IronyToledoTiskin04} &  &  [here]
& \\
$M = $& $\Omega \lt( \lt(\frac{n}{\sqrt{M}}\rt)^3 \cdot \frac{M}{p} \rt)$
&
& $\Omega \lt( \lt(\frac{n}{\sqrt{M}}\rt)^{\omega_0} \cdot \frac{M}{p} \rt)$
& \\
\cline{2-5}
$\;\;\;\; \Theta \lt( \frac{n^2}{p} \rt)$
& $\Omega\lt(\frac{n^2}{p^{\frac12} }  \rt)$
& \cite{Cannon69}
& $\Omega\lt(\frac{n^2}{p^{2-\frac{\omega_0}{2}} }  \rt)$
& [Ballard, \\
&
&
&
& Demmel, \\
\cline{2-4}
$\;\;\;\; \Theta \lt( \frac{n^2}{p^{\frac23}} \rt)$
& $\Omega\lt(\frac{n^2}{p^{\frac23}} \rt)$
& \cite{DekelNassimiSahni81}
& $\Omega\lt( \frac{n^2}{p^{\frac{5-\omega_0}{3}}}\rt)$
& Holtz, \\
&
& \cite{AggarwalChandraSnir90}
&
& Rom, \\
\cline{2-4}
$\;\;\;\; \Theta \lt( c \cdot \frac{n^2}{p} \rt)$
& $\lt( \frac{n^2}{c^{\frac12} p^{\frac12} } \rt)$
& [Solomonik and
& $\Omega\lt( \frac{n^2}{c^{\frac{\omega_0}{2}-1}p^{2-\frac{\omega_0}{2}}}\rt)$
&  Schwartz,\\
$1 \leq c \leq p^{\frac13}$
&
& Demmel 2011]
&
& 2011] \\
\hline
\end{tabular}
\caption{ }
\end{table}

\subsection{Recursive Implementations}\label{sec:recursive-implementations}

In some cases, the simplest recursive implementation of an
algorithm turns out to be communication-optimal (e.g., in the
cases of matrix multiplication
\cite{FrigoLeisersonProkopRamachandran99} and \chold\
\cite{AhmedPingali00,BallardDemmelHoltzSchwartz10a}, but not in
the case of LU decomposition \cite{Toledo97}, which is bandwidth optimal but not latency optimal).
This leads to the
question: when is the communication-optimality of an algorithm
determined by the expansion properties of the corresponding
computation graphs? In this work we showed that such is the case
for ``Strassen-like''
fast matrix multiplication algorithms.

\subsection{Other Hardware}\label{sec:other-hardware} It is of great interest to construct new models general enough to
capture the rich and evolving design space of current and
predicted future computers. Such models can be {\em
homogeneous}, consisting of many layers, where the components of
each layer are the same (e.g., a supercomputer with many identical
multi-core chips on a board, many identical boards in a rack, many
identical racks, and many identical levels of associated memory
hierarchy); or {\em heterogeneous}, with components with different
properties residing on the same level (e.g., CPUs alongside GPUs,
where the latter can do some computations very quickly, but are
much slower to communicate with).

Some experience has been acquired with such systems (see the MAGMA project \cite{MAGMA}, and also \cite{VolkovDemme08} for using GPU assisted linear algebra computation ).
A first step in analyzing such systems has been recently introduced by Ballard, Demmel, and Gearhart
\cite{BallardDemmelGearhart11},
where they modeled heterogenous shared memory architectures, such as mixed GPU/CPU architecture, and obtained tight lower and upper bounds
for $O(n^3)$ matrix multiplication.

Note that we can similarly generalize Corollaries \ref{cor:parallel} and \ref{cor:parallel-str-like}
to other models, such as the heterogenous model
and shared memory model. The reduction is achieved by
observing the communication of a single processor.

However, there is currently no systematic theoretic way of
obtaining upper and lower bounds for arbitrary hardware models.
Expanding such results to other architectures and algorithmic techniques is
a challenging goal.
For example,
recursive algorithms tend to be cache oblivious and communication
optimal for the sequential hierarchy model. Finding an equivalent
technique that would work for an arbitrary architecture is a
fundamental open problem.

\begin{acks}
We thank Eran Rom, Edgar Solomonik, and Chris Umans for helpful discussions.
\end{acks}

\bibliographystyle{acmsmall}
\bibliography{SLB-arXiv}

\appendix

\section{Strassen's Fast Matrix Multiplication Algorithm}\label{sec:strassen-cited}
\noindent Strassen's original algorithm follows \cite{Strassen69}. See \cite{Winograd71} for Winograd's variant, which reduces the number of additions.
\begin{algorithm}
\caption{Matrix Multiplication: Strassen's Algorithm}\label{alg:strassen}
\begin{algorithmic}[1]
\REQUIRE Two $n \times n$ matrices, $A$ and $B$. \IF {$n=1$}
    \STATE  $C_{11} = A_{11}\cdot B_{11}$
\ELSE \STATE \COMMENT{Decompose $A$ into four equal square blocks
$A = \begin{pmatrix}
  A_{11} & A_{12} \\
  A_{21} & A_{22}
\end{pmatrix}$ \\ and the same for $B$.}
\STATE $M_1 = (A_{11} + A_{22}) \cdot (B_{11} + B_{22})$ \STATE
$M_2 = (A_{21} + A_{22})\cdot  B_{11}$ \STATE $M_3 = A_{11}\cdot
(B_{12} - B_{22})$ \STATE $M_4 =        A_{22}\cdot (B_{21} -
B_{11})$ \STATE $M_5 = (A_{11}+ A_{12}) \cdot B_{22}$ \STATE $M_6
= (A_{21} - A_{11}) \cdot (B_{11} + B_{12})$ \STATE $M_7 = (A_{12}
- A_{22}) \cdot (B_{21} + B_{22})$ \STATE $C_{11} = M_1 + M_4 -
M_5 + M_7$ \STATE $C_{12} = M_3 + M_5$ \STATE $C_{21} = M_2 + M_4$
\STATE $C_{22} = M_1 - M_2 + M_3 + M_6$ \ENDIF \RETURN $C$
\end{algorithmic}
\end{algorithm}

\end{document}